\newtheorem{theorem}{Theorem}
\newtheorem{definition}[theorem]{Definition}
\newtheorem{example}[theorem]{Example}
\newtheorem{remark}[theorem]{Remark}
\newtheorem{corollary}[theorem]{Corollary}
\numberwithin{theorem}{section}
\newenvironment{proofof}{{\em Proof of }}{\hfill \hspace*{1pt}
\hfill $\blacksquare$}
\newenvironment{proof}{{\em Proof.}}{\hfill \hspace*{1pt}
\hfill $\blacksquare$}
\newcommand\mymatrix[2]{\left[\begin{array}{#1} #2 \end{array}\right]}
\newcommand\RE{\mathbb{R}}
\DeclareMathOperator{\boundary}{bd}
\DeclareMathOperator{\sector}{Sector}
\begin{document}
\begin{frontmatter}
\title{Analysis of Lur'e dominant systems in the frequency domain} 
\thanks[footnoteinfo]{The research leading to these results has received funding from the European Research Council under the
Advanced ERC Grant Agreement Switchlet n.670645.}
\author[Cambridge]{F. A. Miranda-Villatoro}\ead{fam48@cam.ac.uk},   
\author[Cambridge]{F. Forni}\ead{f.forni@eng.cam.ac.uk},             
\author[Cambridge]{R. Sepulchre}\ead{f.forni@eng.cam.ac.uk}  
\address[Cambridge]{University of Cambridge, Department of Engineering, Trumpington Street, Cambridge CB2 1PZ}                                     
\begin{keyword}          
Lur'e systems, dissipativity, circle criterion, multistability, limit cycles              
\end{keyword}                              
\begin{abstract}                          
Frequency domain analysis of linear time-invariant (LTI) systems in feedback with
static nonlinearities is  a classical and fruitful topic of nonlinear systems theory.
We generalize this approach beyond equilibrium stability analysis with the aim of characterizing feedback 
systems whose asymptotic behavior is low dimensional. We illustrate the theory with a generalization of the 
circle criterion for the analysis of multistable and oscillatory Lur'e feedback systems.
\end{abstract}
\end{frontmatter}
\section{Introduction} \label{sec:intro}
Feedback systems that admit the representation in Figure \ref{fig:lureSys} became seminal 
since the formulation of the absolute stability problem
by Lur'e and Postnikov \cite{lure1944}. They have stimulated a great deal of research in 
nonlinear stability theory. Milestones include (i) the formulation of graphical conditions
in the complex plane to determine necessary and sufficient conditions for absolute stability 
(e.g. the Nyquist criterion and the circle criterion), see e.g. \cite{zames1966a, zames1966b};
(ii) the equivalence between those
frequency-domain conditions and linear matrix inequalities (the KYP lemma) \cite{brogliato2007}; and (iii)
the characterization of input-output properties of nonlinear systems through dissipation
inequalities (dissipativity theory  \cite{willems1972} and Integral Quadratic Constraints 
\cite{megretski1997}). 
Collectively, those developments have provided a rich analytical and computational framework to assist
the search of a quadratic Lyapunov function in the global stability analysis of the feedback system.
\begin{figure}[hbt]
	\centering
	\includegraphics[width=0.45\columnwidth]{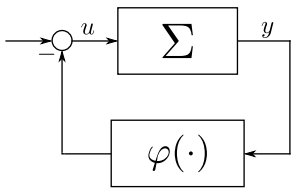}
	\caption{Lur'e feedback system}
	\label{fig:lureSys}
\end{figure}
The present paper seeks to mimic this analysis in Lur'e feedback systems that possess
more general attractors than a single equilibrium. In particular, we are motivated by the analysis
of multistable systems (systems with several stable equilibria) and oscillatory systems 
(systems with a stable limit cycle). 
Our analysis is differential: we look for a
quadratic storage function that decreases along the solutions of the {\it linearized} system along
arbitrary trajectories, possibly rescaled in time.  When such a  storage is positive definite, it is a {\it differential} Lyapunov function 
and serves to prove global contraction to a unique equilibrium \cite{forni2013}. 
Here we relax the positivity condition of the storage. Instead,
we impose that the quadratic form has a given inertia : $p$ negative eigenvalues and
$n-p$ positive eigenvalues. We use this storage to prove absolute $p$-dominance of the Lur'e system, 
that is, contraction to a $p$-dimensional attractor for all nonlinearities satisfying a differential sector condition. 
When $p = 1$, the analysis is relevant to study multistability (the
one dimensional attractor being a heteroclinic orbit). When $p = 2$, the analysis is relevant
to study limit cycles. For $p = 0$, the analysis boils down to the differential analysis of the
absolute stability problem, pioneered by Kalman \cite{kalman1957} and studied by many authors 
since then \cite{jouffroy2010, lohmiller1998, pavlov2005}.
The main message of our paper is that the many classical tools developed in absolute stability theory 
to construct quadratic Lyapunov functions  carry over the construction of quadratic storage with a given inertia.
This means that the analysis of absolute p-dominance closely resemble the analysis of absolute
stability, simply replacing the requirement of positive definiteness of the storage by a requirement 
on the inertia. 
The paper is organized as follows: the next section recalls the results of \cite{forni2017}
dealing with $p$-dominance in the time domain. Afterwards, the analysis of $p$-dominance in the 
frequency domain is studied in Section \ref{section:pdominance:frequency}, with
the extensions of the Nyquist criterion and the 
Kalman-Yakubovich-Popov lemma for the study of $p$-passivity as main results. 
These results are applied to the study of Lur'e feedback systems
in the differential framework in Section \ref{section:lure:frequency}, together with the 
generalization of the circle criterion. Section \ref{section:lure:differential} illustrates the
application of the differential framework to the study of low dimensional attractors through
some examples. Finally, we end the paper with a brief section presenting conclusions and future research.  
{
\small
\textbf{Notation.}
Let $A \in \RE^{n \times n}$ be a real matrix, we denote the spectrum of $A$ by
$\sigma(A)$ which is a subset of the set of complex numbers denoted by $\mathbb{C}$.
The sets $\mathbb{C}_{+} := \left\{ s \in \mathbb{C} | \Re \{s \} > 0 \right\}$,
$\mathbb{C}_{-} := \left\{ s \in \mathbb{C} | \Re\{s \} < 0 \right\}$ denote the 
half spaces in the complex plane with positive and negative real parts respectively. 
The inertia of a matrix $A \in \RE^{n \times n}$ is a set of three elements containing the 
number of eigenvalues with positive, zero and negative real parts, respectively, that is,
a matrix with inertia $(p, 0, n-p )$ has $p$ eigenvalues in $\mathbb{C}_{+}$, $0$ eigenvalues
on the $j \omega$-axis and $n - p$ eigenvalues in $\mathbb{C}_{-}$.  
Let $s \in \mathbb{C}$ be a complex number, then $\overline{s}$ denotes the complex conjugate
of $s$. In the same way, if $z \in \mathbb{C}^{n}$, then $z^{*}$ stands for the complex conjugate
transpose operator, that is, $z^{*} = \overline{z}^{\top}$, where the complex conjugate of a
vector is taken element-wise.
}
\section{Dominance in the time domain}
\label{section:pdominance:time}
\subsection{Dominance of linear systems}
A  linear system
\begin{equation}
\label{eq:linear}
\dot{x} = Ax \quad\qquad x\in\RE^n
\end{equation}
is $p$-dominant if its behavior is characterized by $p$ dominant modes and $n-p$ transient modes, that is, if its asymptotic behavior is $p$-dimensional. 
This property is of special interest if $p$ is much smaller than $n$.  One characterization of  the property is via  a matrix inequality with inertia constraints.
\begin{definition}
\label{def:p-dominance}
The linear system \eqref{eq:linear} is \emph{$p$-dominant 
with rate $\lambda \geq 0$} if and only if there exist 
a symmetric matrix $P$ with inertia $(p,0,n-p)$ and $\varepsilon \geq 0$ such that
\begin{equation}
\label{eq:p-dominance}
\mymatrix{c}{\dot{x} \\ x}^{\top}\!
\mymatrix{cc}{
0 & P \\ P &  \ 2\lambda P + \varepsilon I
}
\mymatrix{c}{\dot{x} \\ x}
\leq 0
\end{equation}
for all $x \in \RE^n$.  The property is {\it strict} if $\varepsilon >0$.
\end{definition}
The dissipation inequality \eqref{eq:p-dominance} is equivalent to the linear matrix
inequality \begin{equation}
\label{eq:LMI-dominance}
A^{\top} P + P A + 2 \lambda P \leq -\epsilon I \, ;
\end{equation} 
which corresponds to a standard Lyapunov inequality
for the shifted matrix $A+\lambda I$, with $P$ of a given inertia. 
Following \cite[Theorem 1]{forni2017}, for $\varepsilon > 0$ 
the constraint $(p,0,n-p)$ on the inertia of $P$ guarantees
that the matrix $A+ \lambda I $ has $p$ eigenvalues 
with strictly positive real part and $n-p$ eigenvalues with strictly negative real part.
Indeed, the original linear system $\dot{x} = Ax$ has $n-p$ transient modes
whose convergence to zero is bounded by the rate $\lambda$.
The remaining $p$ modes may converge to zero at exponential rate or even diverge. 
Those modes dominate the system behavior.
Dissipativity theory \cite{willems1972}  provides an extension of the conic constraint \eqref{eq:p-dominance}
to open systems with the state-space representation 
\begin{equation}
\label{eq:linear-open}
\begin{cases}
	\dot{x} = A x + B u  \\
	y = C x + D u	
\end{cases}
\end{equation}
where $x\in \RE^n$ and $w := (y,u) \in \RE^{m_1} \times \RE^{m_2}$.
\begin{definition}
\label{def:p-dissipativity}
The linear system \eqref{eq:linear-open}
is \emph{$p$-dissipative with rate $\lambda \geq 0$ 
and supply}
\begin{equation}
\label{eq:supply}
s(w) := \mymatrix{c}{y \\ u}^{\!\top}  \!\!
\mymatrix{cc}{
Q & L \\ L^{\top} & R
} \!
\mymatrix{c}{y \\ u}
\end{equation} 
if there exists a symmetric matrix $P$ with inertia $(p,0,n-p)$
and $\varepsilon \geq 0$ such that 
\begin{equation}
\label{eq:p-dissipativity}
\mymatrix{c}{\dot{x} \\  x}^{\!\top}\!\!
\mymatrix{cc}{
0 & P \\ P & 2\lambda P + \varepsilon I
}\!
\mymatrix{c}{\dot{x} \\ x}\!
\leq\!
\mymatrix{c}{y \\ u}^{\!\top} \!\!
\mymatrix{cc}{
Q & L \\ L^{\top} & R
} \!
\mymatrix{c}{y \\ u}
\end{equation}
for all  $x \in \RE^n$ and all $w \in \RE^{m_1+m_2}$.
The property is strict if $\varepsilon > 0$.
\end{definition}
\eqref{eq:p-dissipativity} is a standard dissipation inequality 
for the shifted system $\dot{x} = (A+\lambda I) x + B u$, $y=Cx + Du$.
The classical interpretation \cite{willems1972b}
is that the supply rate bounds the variation of the storage $x^{\top} P x$ along the
trajectories of the shifted system. If the storage is positive definite, dissipativity 
implies stability if the supply is nonpositive. This internal property is replaced by
$p$-dominance in the case of $p$-dissipativity. For instance,
following \cite{forni2017} and \cite{Forni2017b}, we observe that 
\eqref{eq:p-dissipativity} is equivalent to the feasibility of the 
linear matrix inequality
{\footnotesize
\begin{equation}
\label{eq:lmi:pPassive}
\mymatrix{cc}
{
\!\!A^{\top} \! P \!+\! P\! A  \!-\!C^{\top}  \! Q C \!+ \!2\lambda P \!+\! \varepsilon I \!\! &\!\! \! P B \!-\! C^{\top}\!L-C^{\top}  \! Q D \!\!\! \\
 \!\! B^{\top} \!P \!-\! L^{\top} C-D^{\top}  \! Q C \!\!  \!\!& \!\!  -\!R\!-\!D^{\!\top}\!\! L \!-\!L^{\!\top}\!\! D \!-\! D^{\top}\!QD\!\!\!
 }
\!\!\leq\! 0.
\end{equation}
}
A necessary condition is that the top-left element in \eqref{eq:lmi:pPassive} is non positive which,
for $Q=0$, guarantees $p$-dominance of \eqref{eq:linear}.
The supply rate \eqref{eq:supply} embraces many cases studied in the classical literature of dissipative systems. 
For example, the case 
$Q = R = 0$ and $L = I$ identifies the class of \emph{$p$-passive} systems.
Likewise, $Q = - \varepsilon I$ and 
$R = - \delta I$ with $L = I$ correspond to 
\emph{strictly output $p$-passive}
and \emph{strictly input $p$-passive} systems, respectively.
\subsection{Dominance of nonlinear systems}
Dominance is defined for nonlinear systems by making the analysis  {\it differential}
\cite{forni2014,forni2017,lohmiller1998}, i.e. by requiring dominance of 
the linearized system equations along arbitrary trajectories.
For the nonlinear system 
\begin{equation}
\label{eq:system}
\dot{x} = f(x)  \qquad \quad x\in \RE^n
\end{equation}
the prolonged system \cite{Crouch1987}
\begin{equation}
\label{eq:prolonged}
\left\{
\begin{array}{rcl}
\dot{x} &=& f(x) \\
\dot{\delta x} &=& \partial f(x) \delta x 
\end{array}
\right.
\qquad (x,\delta x)\in \RE^n \!\times\! \RE^n
\end{equation}
defines the linearization of the system along any
of its trajectories. The intuition is that the variational trajectories $\delta x(\cdot)$ 
capture the nonlinear behavior in an infinitesimal neighborhood of any trajectory $x(\cdot)$,
\cite{forni2014,Forni2014b}.
\begin{definition}
\label{def:diff-p-dominance}
A nonlinear system \eqref{eq:system} is \emph{$p$-dominant 
with rate $\lambda\geq0$} and constant symmetric matrix $P$ with inertia $(p,0,n-p)$
if there exists a  constant $\varepsilon \geq 0$ for which the prolonged system \eqref{eq:prolonged}
satisfies the conic constraint
\begin{equation}
\label{eq:diff-p-dominance}
\mymatrix{c}{\!\dot {\delta x}\!\! \\ \!\delta x\!\!}^{\top} \!
\mymatrix{cc}{
0 & P \\ P & \ 2\lambda P + \varepsilon I
}
\mymatrix{c}{\!\dot {\delta x}\!\! \\ \!\delta x\!\!}
\leq 0
\end{equation}
for all $(x,\delta x)\in \RE^n \times \RE^n$ The property is strict if $\varepsilon>0$.
\end{definition}
Indeed, \eqref{eq:diff-p-dominance} is equivalent to finding a uniform solution $P$ 
to the linear matrix inequality 
\begin{equation}
\label{eq:LMI-dominance:diff}
\partial f(x)^{\top} P + P \partial f(x) + 2 \lambda P \leq -\epsilon I 
\end{equation} 
for each $x\in \RE^n$.
Dominance strongly restricts the asymptotic behavior of the nonlinear system.
$0$-dominant systems are contractive \cite{lohmiller1998, pavlov2005,Forni2014b},
thus all bounded trajectories converge to a unique fixed point. 
$1$-dominant systems are 
monotone systems \cite{Hirsch2006,Angeli2003}, their attractors must
preserve a suitable order relation, which enforces generic convergence to fixed points.
$2$-dominant systems make contact with the property of monotonicity
with respect to rank-2 cones, exploited in \cite{Smith1980,Sanchez2009} to characterize
periodic attractors. The reader is referred to \cite{Forni2017b} for a thorough comparison
with the literature and for a detailed analysis of the behavior of a $p$-dominant
systems. The following results from \cite{Forni2017b} justifies the interest of this paper for 
$p$-dominance.
\begin{theorem}
\label{thm:reduced}
Let  \eqref{eq:system} be a 
strictly $p$-dominant system with rate $\lambda \geq 0$.
Then, the flow on any compact $\omega$-limit set is topologically equivalent 
to a flow on a compact invariant set of a Lipschitz system in $\RE^p$.
\end{theorem}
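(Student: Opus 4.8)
The plan is to translate the algebraic inertia condition \eqref{eq:diff-p-dominance} into a geometric dominated splitting of the tangent bundle, to integrate the strongly contracted part into an invariant foliation of the $\omega$-limit set, and to identify the quotient flow with the claimed Lipschitz system on $\RE^{p}$.

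First I would record the spectral content of strict dominance. Pointwise in $x$, \eqref{eq:LMI-dominance:diff} is a Lyapunov inequality for the shifted Jacobian $\partial f(x)+\lambda I$ certified by the sign-indefinite matrix $P$ of inertia $(p,0,n-p)$; as recalled below \eqref{eq:LMI-dominance}, this forces $\partial f(x)+\lambda I$ to have exactly $p$ eigenvalues in $\mathbb{C}_{+}$ and $n-p$ in $\mathbb{C}_{-}$, uniformly over $x$. Equivalently, the Jacobian has a uniform spectral gap across the line $\mathrm{Re}=-\lambda$, with $p$ dominant directions to the right and $n-p$ transient directions to the left. Because the $\omega$-limit set $\Omega$ is compact and invariant, $\partial f$ is bounded on it and this gap is uniform along every trajectory of $\Omega$, which is exactly the hypothesis needed for the hyperbolic-type estimates that follow.

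Next I would promote \eqref{eq:diff-p-dominance} to a cone-field statement. The quadratic form $\delta x^{\top}P\delta x$ defines a rank-$p$ cone field $\mathcal{K}$, whose maximal linear subspaces are $p$-dimensional and contain the dominant eigenspace. Passing to the rescaled variational variable $\eta:=e^{\lambda t}\delta x$, which evolves under $\partial f(x)+\lambda I$, inequality \eqref{eq:diff-p-dominance} yields $\tfrac{\rd}{\rd t}\bigl(\eta^{\top}P\eta\bigr)\le-\varepsilon|\eta|^{2}$, so $\mathcal{K}$ is strictly forward invariant and its complement is strictly contracted. Standard cone-field / dominated-splitting arguments then give, along every trajectory in $\Omega$, a continuous invariant splitting $T_{x}\RE^{n}=E^{\mathrm{c}}(x)\oplus E^{\mathrm{s}}(x)$ with $\dim E^{\mathrm{c}}=p$ and $\dim E^{\mathrm{s}}=n-p$, together with a definite domination margin, fixed by $\lambda$ and $\varepsilon$, by which vectors tangent to $E^{\mathrm{s}}$ decay faster than those tangent to $E^{\mathrm{c}}$.

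Finally I would integrate and quotient. The strongly contracted subbundle $E^{\mathrm{s}}$ integrates, by a graph-transform (Hadamard--Perron) construction that uses the domination margin, to a continuous invariant foliation of $\Omega$ by Lipschitz $(n-p)$-dimensional leaves along which any two trajectories are forward-asymptotic. Collapsing each leaf yields a $p$-dimensional quotient $\Omega/\!\!\sim$ that carries a well-defined induced flow; compactness together with the in-leaf contraction makes the quotient projection a topological equivalence, and the boundedness of $\partial f$ on $\Omega$ renders the projected vector field Lipschitz, so that the quotient is realized as a compact invariant set of a Lipschitz system in $\RE^{p}$. I expect the main obstacle to be precisely this last step: establishing existence and adequate (Lipschitz) regularity of the strong stable foliation and checking that the reduced dynamics is a genuine Lipschitz ordinary differential equation rather than merely a continuous flow. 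The domination margin produced in the second step is what makes the foliation well defined and the reduction Lipschitz, in the spirit of the quadratic-form / invariant-cone reductions of \cite{Smith1980,Sanchez2009}; the complete argument is carried out in \cite{Forni2017b}.
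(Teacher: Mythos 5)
First, a remark on the target: the paper itself gives no proof of Theorem \ref{thm:reduced} --- it is imported from \cite{Forni2017b}, where it is established by a projection argument in the style of \cite{Smith1980}. Measured against that argument, your proposal contains a genuine gap, and it sits exactly where you suspect. The fatal move is your second step: once you replace the \emph{constant} quadratic cone $\{\delta x:\delta x^{\top}P\delta x\le 0\}$ by ``a continuous invariant splitting $E^{\mathrm{c}}\oplus E^{\mathrm{s}}$ with a domination margin'', everything afterwards uses only the splitting, and that intermediate statement is strictly weaker than $p$-dominance --- too weak to imply the theorem. Concretely, the geometric Lorenz attractor carries exactly the structure your second step produces for $p=2$: a continuous invariant splitting with a one-dimensional uniformly contracting bundle dominated by a two-dimensional center bundle. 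Yet its flow is transitive and contains infinitely many periodic orbits, so it cannot be topologically equivalent to a flow on a compact invariant set of a planar Lipschitz system (Poincar\'e--Bendixson theory excludes such invariant sets in the plane). The same example exhibits the precise failure mode of your final step: collapsing the strong stable leaves of the Lorenz attractor yields a semi-flow on a branched manifold, i.e.\ only a \emph{semi}-conjugacy. A quotient by stable leaves is a topological equivalence only if every leaf meets the $\omega$-limit set $\Omega$ in at most one point; this injectivity is the heart of the theorem, it does not follow from the splitting, and nothing in your outline supplies it.

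What supplies it is the constancy of $P$, used a \emph{second} time, after the cone step. Because $P$ is constant, the pointwise inequality \eqref{eq:LMI-dominance:diff} can be integrated along the straight segment joining two solutions $x_1(t)$, $x_2(t)$, giving the \emph{incremental} inequality
\begin{equation*}
\frac{\rd}{\rd t}\bigl[\,w^{\top}P\,w\,\bigr]+2\lambda\, w^{\top}P\,w\;\le\;-\varepsilon\,|w|^{2},
\qquad w:=x_1-x_2 .
\end{equation*}
(This step is unavailable for a genuinely point-dependent cone field, which is why Lorenz is not a counterexample to the theorem itself.) Points of a compact $\omega$-limit set have complete bounded backward orbits inside $\Omega$, so a backward Gr\"onwall argument on this inequality forces $w^{\top}Pw<0$ for any two distinct points of $\Omega$: differences lie strictly inside the dominant cone. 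Consequently the \emph{fixed linear} projection $\Pi$ onto the $p$-dimensional dominant eigenspace of $P$ (the span of eigenvectors on which the quadratic form is negative), taken along the complementary eigenspace, is injective on $\Omega$ with Lipschitz inverse: if $\Pi x_1=\Pi x_2$ then $w$ lies in the complementary eigenspace and $w^{\top}Pw\ge 0$, a contradiction, and quantitatively $|w|\le c\,|\Pi w|$ whenever $w^{\top}Pw<0$. One then pushes $f$ forward through $\Pi|_{\Omega}$ and extends the resulting vector field to all of $\RE^{p}$ by a McShane--Kirszbraun extension; no foliation, graph transform, or quotient topology is needed, and the reduced system is Lipschitz by construction. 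Your outline could be repaired by inserting this incremental lemma --- it implies in particular that a strong stable leaf meets $\Omega$ at most once --- but once it is available the foliation machinery does no work. Ending with ``the complete argument is carried out in \cite{Forni2017b}'' concedes precisely this missing step.
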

The theorem captures the property that the asymptotic behavior of a $p$-dominant system
is $p$-dimensional. For $p \leq 2$, $p$-dominance strongly constrains the possible attractors.
\begin{corollary}
\label{thm:asymptotic_behavior}
Under the assumptions of Theorem \ref{thm:reduced},
every bounded solution of \eqref{eq:system} asymptotically converge to
\begin{itemize}
\item a unique fixed point if $p = 0$;
\item a fixed point if $p = 1$;
\item a simple attractor if $p=2$, i.e. a fixed point, a set of fixed points and their connected arcs, or a limit cycle. 
\end{itemize}
\end{corollary}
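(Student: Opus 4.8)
The plan is to reduce the asymptotic analysis to the classification of flows in dimension at most two and then invoke classical low-dimensional dynamical systems theory. Fix a bounded solution $x(\cdot)$ of \eqref{eq:system}. First I would recall the standard facts that its $\omega$-limit set $\Omega$ is nonempty, compact, connected, and invariant, and moreover internally chain transitive, the latter holding because $\Omega$ is the $\omega$-limit set of a single trajectory of an autonomous flow. By Theorem \ref{thm:reduced} the flow restricted to $\Omega$ is topologically equivalent to the flow of a Lipschitz vector field on a compact invariant set $K \subseteq \RE^p$. Since topological equivalence is realized by an orbit-preserving homeomorphism, it preserves connectedness, compactness, invariance, and internal chain transitivity; hence $K$ inherits all these properties, and it suffices to classify $K$ and transport the conclusion back to $\Omega$ through the homeomorphism.

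For $p=0$ the ambient space $\RE^0$ is a single point, so $K$, and therefore $\Omega$, is a singleton; an invariant singleton is a fixed point. Uniqueness here is a separate ingredient not supplied by Theorem \ref{thm:reduced}: strict $0$-dominance forces $P$ to be negative definite, so $-P$ induces a genuine differential Lyapunov metric and the system is contractive, whence it admits at most one equilibrium toward which every bounded solution converges. For $p=1$ I would use that scalar Lipschitz flows $\dot{y}=g(y)$ are monotone in time along trajectories, so every trajectory is eventually monotone and can only accumulate at a fixed point; consequently the only internally chain transitive compact invariant subsets of $\RE$ are singletons. Thus $K$ is a fixed point and $\Omega$ is a fixed point of \eqref{eq:system}. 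Note that uniqueness is no longer claimed, reflecting the coexistence of several equilibria in the multistable regime.

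For $p=2$ the reduced object $K$ is a compact, internally chain transitive invariant set of a planar Lipschitz flow. Here I would appeal to the Poincar\'e--Bendixson theorem in the form that classifies such sets: $K$ is either a single fixed point, a periodic orbit, or a finite union of fixed points together with the homoclinic and heteroclinic orbits connecting them. Pulling this classification back through the topological equivalence yields exactly the three listed alternatives for $\Omega$ --- a fixed point, a set of fixed points and their connecting arcs, or a limit cycle --- that is, a simple attractor.

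The main obstacle I anticipate is the careful justification that the structural hypotheses required by Poincar\'e--Bendixson actually transfer across the topological equivalence provided by Theorem \ref{thm:reduced}. Poincar\'e--Bendixson is usually phrased for the $\omega$-limit set of an individual planar trajectory, whereas here $K$ is only known abstractly to be a compact invariant set conjugate to $\Omega$; the clean way to bridge this gap is to work throughout with internal chain transitivity, a purely topological-dynamical property preserved by orbit-preserving homeomorphisms, and to invoke the internally-chain-transitive version of Poincar\'e--Bendixson. A secondary point requiring care is the degenerate low-dimensional reasoning for $p\le 1$, namely ruling out nondegenerate intervals as $\omega$-limit sets in $\RE$ and establishing the contraction-based uniqueness for $p=0$, but these are routine once the reduction is in place.
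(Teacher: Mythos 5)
There is no proof of this corollary in the paper to compare against: both Theorem \ref{thm:reduced} and the corollary are imported from \cite{Forni2017b}, whose argument (in the tradition of \cite{Smith1980,Sanchez2009}) re-uses the dominance structure itself rather than only the statement of Theorem \ref{thm:reduced}. This matters because the genuine gap in your proposal is exactly there: you try to deduce the corollary from the \emph{conclusion} of Theorem \ref{thm:reduced} plus properties preserved by topological equivalence (compactness, connectedness, invariance, internal chain transitivity), and both classification lemmas this requires are false. In $\RE$, a nondegenerate interval of equilibria (e.g.\ $\dot y = 0$ on $[0,1]$) is compact, connected, invariant and internally chain transitive --- chains advance by $\varepsilon$-jumps through rest points --- so such sets need not be singletons. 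In $\RE^2$, the annulus $\{1 \leq |w| \leq 2\} \subset \mathbb{C}$ foliated by the periodic orbits of $\dot w = j w$, or a disk of equilibria, is internally chain transitive yet is neither a fixed point, nor a periodic orbit, nor fixed points with connecting arcs; the ``internally-chain-transitive version of Poincar\'e--Bendixson'' your plan hinges on does not exist. Poincar\'e--Bendixson needs the set to be the $\omega$-limit set of an actual planar trajectory, and $K$ is not known to be one.

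Moreover, this is not a repairable technicality: the corollary cannot follow from the conclusion of Theorem \ref{thm:reduced} by any purely topological argument. In cylindrical coordinates $(r,\theta,z)$ consider the Lipschitz flow $\dot\theta = 1$, $\dot r = z\sin(\log(1/z))$, $\dot z = -z^{2}$ for $z>0$ (extended by zero to $z \leq 0$; the $r$-equation has globally bounded difference quotients). Then $z(t) = 1/(t+c)$ and $r(t) = 3 - \cos(\log(t+c))$, so $r$ sweeps $[2,4]$ infinitely often, ever more slowly, while $\theta$ wraps at unit rate: the $\omega$-limit set is the \emph{whole} annulus $\{z=0,\, 2 \leq r \leq 4\}$, on which the induced flow is $\dot r = 0$, $\dot\theta = 1$, i.e.\ exactly a Lipschitz planar flow on a compact invariant planar set. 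So this $\omega$-limit set satisfies the conclusion of Theorem \ref{thm:reduced} with $p=2$ and all your transferred properties, yet the trichotomy fails; consequently no argument using only those facts can ever exclude it. The planar analogue $\dot x = y\sin(\log(1/y))$, $\dot y = -y^{2}$ produces an $\omega$-limit set that is a segment of equilibria, defeating the $p=1$ claim (no convergence to a single fixed point) in the same way. Neither example is strictly dominant --- the transverse decay is only algebraic --- and that is precisely the point: ``under the assumptions of Theorem \ref{thm:reduced}'' means strict $p$-dominance, and a correct proof must re-invoke the quadratic cone $\delta x^{\top} P \delta x$ along pairs of points of the \emph{original} trajectory, as in \cite{Smith1980,Sanchez2009}, to show that the projection onto the dominant eigenspace of $P$ eventually behaves like a genuine scalar or planar trajectory, to which monotone convergence or classical Poincar\'e--Bendixson then applies. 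Your $p=0$ case is the only one argued in this spirit (contraction supplies uniqueness) and is essentially right, up to a sign slip: with the convention that makes the theory consistent ($P$ having $p$ negative and $n-p$ positive eigenvalues, as in the paper's introduction), $P$ is positive definite for $p=0$ and is itself the contraction metric, not $-P$.
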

In what follows we will study $p$-dominant systems arising from the interconnection 
of $p$-dissipative linear systems with static nonlinearities satisfying a differential sector condition.
Theorem \ref{thm:reduced} and Corollary \ref{thm:asymptotic_behavior} will be particularly useful
to predict the asymptotic behavior of those closed-loop systems. 
\section{Dominance in the frequency domain}
\label{section:pdominance:frequency}
\subsection{Nyquist criterion for $p$-dominance}
Under the standing assumption of minimal realization, 
the poles of the transfer function $G(s) := C(sI-A)^{-1} B +D$
of a strict $p$-dominant system with non-negative rate $\lambda$ 
are separated in two groups. The $p$ dominant poles belong to the interior of the Nyquist region 
\begin{equation}
\label{eq:nyquist_region}
	\Omega_{\lambda} := \left\{ s \in \mathbb{C} | s+ \lambda \in \mathbb{C}_{+} \right\}.
\end{equation}
while the  remaining $n-p$ poles belong to its complement. This follows directly from
the separation of eigenvalues of the matrix $A+\lambda I$ into unstable and stable groups,
which guarantees that the shifted transfer function 
$G(s-\lambda) = C(sI-A-\lambda I)^{-1} B +D$ has $p$-poles in $\mathbb{C}_{+}$ and
$n-p$ poles in $\mathbb{C}_{-}$.
The Nyquist criterion is a cornerstone of control theory for the study of closed loop stability.
The principle of the argument relates the Nyquist 
locus of the open loop system to the position of the poles in closed loop, 
providing graphical conditions for closed loop stability. A similar approach can be 
pursued for $p$-dominance. A straightforward application
of the principle of the argument adapted to the Nyquist region $\Omega_{\lambda}$
provides conditions for closed loop $p$-dominance based on the
Nyquist locus of the open loop transfer function. 
\begin{theorem}[Nyquist dominance criterion] \label{theorem:nyquistCriterion}
	Let $G(s)$ be the (SISO) transfer function of a strict $p_1$-dominant system with rate $\lambda\geq 0$.
	Then, the closed-loop system 
	$G(s)/(1 + kG(s))$ is strictly $p_2$-dominant with rate $\lambda$ if and only if
	the Nyquist plot of $G(s)$ computed along the boundary of $\Omega_{\lambda}$, encircles the point
	$-1/k$, $(p_2-p_1)$-times in the clockwise direction.
\end{theorem}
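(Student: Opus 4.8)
The plan is to reduce the statement to the classical Nyquist stability criterion via the change of variable $s \mapsto s-\lambda$, and then invoke the principle of the argument along the boundary $\partial\Omega_\lambda$. First I would introduce the shifted transfer function $G_\lambda(s) := G(s-\lambda)$, which is the transfer function of the shifted realization $(A+\lambda I, B, C, D)$. Because $G(s)$ realizes a strict $p_1$-dominant system, the discussion preceding the theorem shows that $A+\lambda I$ has exactly $p_1$ eigenvalues in $\mathbb{C}_+$ and $n-p_1$ in $\mathbb{C}_-$, and none on the imaginary axis; equivalently $G$ has $p_1$ poles in the interior of $\Omega_\lambda$, $n-p_1$ in its complement, and none on the line $\Re\{s\}=-\lambda=\partial\Omega_\lambda$. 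Evaluating $G$ along $\partial\Omega_\lambda$ is exactly evaluating $G_\lambda$ along the imaginary axis, since $G(-\lambda+j\omega)=G_\lambda(j\omega)$, so the ``Nyquist plot of $G$ along $\partial\Omega_\lambda$'' is the ordinary Nyquist plot of $G_\lambda$.

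Next I would identify the closed-loop poles. Writing $G=N/D$ with $\deg D = n$ (minimal realization, $G$ proper), the closed loop $G/(1+kG)=N/(D+kN)$ has characteristic polynomial $D+kN$ of degree $n$, so it again has $n$ poles, which are precisely the zeros of $1+kG$ under the standing minimality assumption. Strict $p_2$-dominance of the closed loop with rate $\lambda$ is, by the same pole-location characterization, equivalent to $1+kG$ having exactly $p_2$ zeros in the interior of $\Omega_\lambda$, $n-p_2$ in its complement, and none on $\partial\Omega_\lambda$. Since the poles of $1+kG$ coincide with the poles of $G$, I would then apply the principle of the argument to $F(s):=1+kG(s)$ along $\partial\Omega_\lambda$, closed by a large semicircle into $\Omega_\lambda$. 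The semicircle at infinity contributes no net rotation because $G$ is proper, so $F\to 1+kD$ as $|s|\to\infty$. The argument principle then gives that the number of clockwise encirclements of the origin by $F$ equals the number of zeros of $F$ in $\Omega_\lambda$ minus the number of poles of $F$ in $\Omega_\lambda$, that is $p_2-p_1$. Finally, the identity $1+kG=k\big(G-(-1/k)\big)$ converts encirclements of the origin by $F$ into encirclements of $-1/k$ by the Nyquist plot of $G$, yielding the claimed count $p_2-p_1$.

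For the biconditional I would argue both implications from this single equality. In the forward direction, strict $p_1$- and $p_2$-dominance guarantee that neither $G$ nor $1+kG$ has a pole or zero on $\partial\Omega_\lambda$, so the locus never passes through $-1/k$, the winding number is well defined, and it equals $p_2-p_1$. Conversely, the statement that the locus encircles $-1/k$ a well-defined integer number of times already forces the locus to avoid $-1/k$, hence $1+kG$ has no zero on $\partial\Omega_\lambda$, i.e.\ no closed-loop pole on the boundary line; the count $p_2-p_1$ then pins the number of closed-loop poles inside $\Omega_\lambda$ to $p_2$ and, by the total being $n$, the number outside to $n-p_2$, which is exactly strict $p_2$-dominance with rate $\lambda$.

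The steps that need the most care are the orientation bookkeeping and the role of strictness. The hard part will be fixing the orientation of $\partial\Omega_\lambda$ consistently so that encirclements are counted clockwise and the sign comes out as $p_2-p_1$ rather than $p_1-p_2$; this is where the convention that $\Omega_\lambda$ plays the role of the ``unstable'' region (the right half-plane in the shifted variable) must be tracked through the argument principle. The second delicate point, needed to make the winding number well defined and the ``if and only if'' clean, is that strictness ($\varepsilon>0$) is precisely what removes poles and zeros of $1+kG$ from the contour; without it the Nyquist locus could cross $-1/k$ and the encirclement count would be meaningless. The reduction to $G_\lambda$ and the classical criterion makes everything else routine.
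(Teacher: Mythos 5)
Your proposal is correct and follows essentially the same route as the paper's own proof: apply the principle of the argument to $f(s)=1+kG(s)$ along $\partial\Omega_\lambda$, use the shift $s\mapsto s-\lambda$ to reduce to the standard Nyquist path, and identify poles of $f$ with open-loop poles and zeros of $f$ with closed-loop poles to get $E=p_2-p_1$. Your write-up is somewhat more complete than the paper's (explicit closure of the contour at infinity, explicit treatment of both directions of the biconditional, and the role of strictness in keeping poles and zeros off the contour); the only blemish is the notational collision of $D$ as both the denominator polynomial and the feedthrough term.
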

\begin{proofof}\emph{Theorem \ref{theorem:nyquistCriterion}.}
	The proof follows  from Cauchy's principle of the argument \cite[Section 6.2]{marsden1999},
	by taking $f(s) = 1 + k G(s)$ and counting the encirclements $E$ of the origin as $s$ moves along 
	the boundary $\boundary \Omega_{\lambda}$ of $\Omega_{\lambda}$.
	Equivalently, we observe that 
	the Nyquist plot of $G(s)$ along the Nyquist path $\boundary \Omega_{\lambda}$ is the same as the 
	Nyquist plot of $G(s-\lambda)$ along the Nyquist path $s=j\omega$, $\omega \in \RE$. 
	Thus, the encirclements can be counted as $s - \lambda$ varies along the $j \omega$-axis.
	
	Define the closed loop transfer function $W(s) := \frac{G(s)}{1 + k G(s)}$ given by a negative
	feedback loop around $G(s)$ with gain $k>0$. Take $$f(s) := 1 + k G(s)$$ and 
	consider $G(s)$ as the ratio of two polynomials $n(s)/d(s)$. Then, 
        $G(s-\lambda) = n(s - \lambda) / d(s - \lambda)$ and 
	\begin{displaymath}
		f(s-\lambda) = 1 + k G(s-\lambda) = \frac{d(s - \lambda) + k n(s - \lambda)}{d(s - \lambda)}.
	\end{displaymath}
	Indeed the poles of $f(s-\lambda)$ correspond to the poles of $G(s-\lambda)$.
	The zeros of $f(s-\lambda)$ correspond to the shifted closed-loop poles,
	given by $W(s-\lambda)$. Hence, the number of clockwise encirclements 
	$E$ is  the difference between the number of zeros and poles of $f(s-\lambda)$ in $\mathbb{C}_+$,
	that is, $E = p_2 - p_1$. The result follows.
\end{proofof}

Theorem \ref{theorem:nyquistCriterion} shows that dominance of the closed loop can be determined from the 
Nyquist locus of the shifted transfer function $G(s-\lambda)$. The degree of closed loop dominance $p_2=E-p_1$
is given by the difference between the clockwise encirclements of the locus around $-\frac{1}{k}$ 
minus the number of unstable poles of $G(s-\lambda)$.
The classical Nyquist criterion for stability is recovered from the previous
theorem by taking $\lambda = 0$ and $p_2 = 0$. 
We note that the theorem still applies when $G(s-\lambda)$ has either poles or zeros on the
$j \omega$-axis,  using the standard indentation technique
along the boundary of $\Omega_{\lambda}$.
\begin{example}
	Consider the linear system 
	\begin{equation}
		\label{eq:nyquist:sys}
		G(s) = \frac{10}{(s^2 + 2 s + 2)(s + 3)}.
	\end{equation}
	The poles of $G(s)$ are $-1 \pm j$ and $-3$. For $\lambda = 2.5$,
	$G(s-\lambda)$ has $2$ poles in $\mathbb{C}_{+}$ and 1 pole in $\mathbb{C}_{-}$.
	The Nyquist plot of $G(s)$ along the boundary of $\Omega_{\lambda}$ is in 
	Figure \ref{fig:nyquist_ex}.  For any positive $k$, there are no encirclements of the point 
	$-\frac{1}{k}$. Thus, the closed loop system formed by the negative feedback of \eqref{eq:nyquist:sys} 
	with any static gain $k>0$ is $2$-dominant with rate $\lambda = 2.5$.
		\begin{figure}[htpb]
			\centering
			\includegraphics[width=0.4\textwidth]{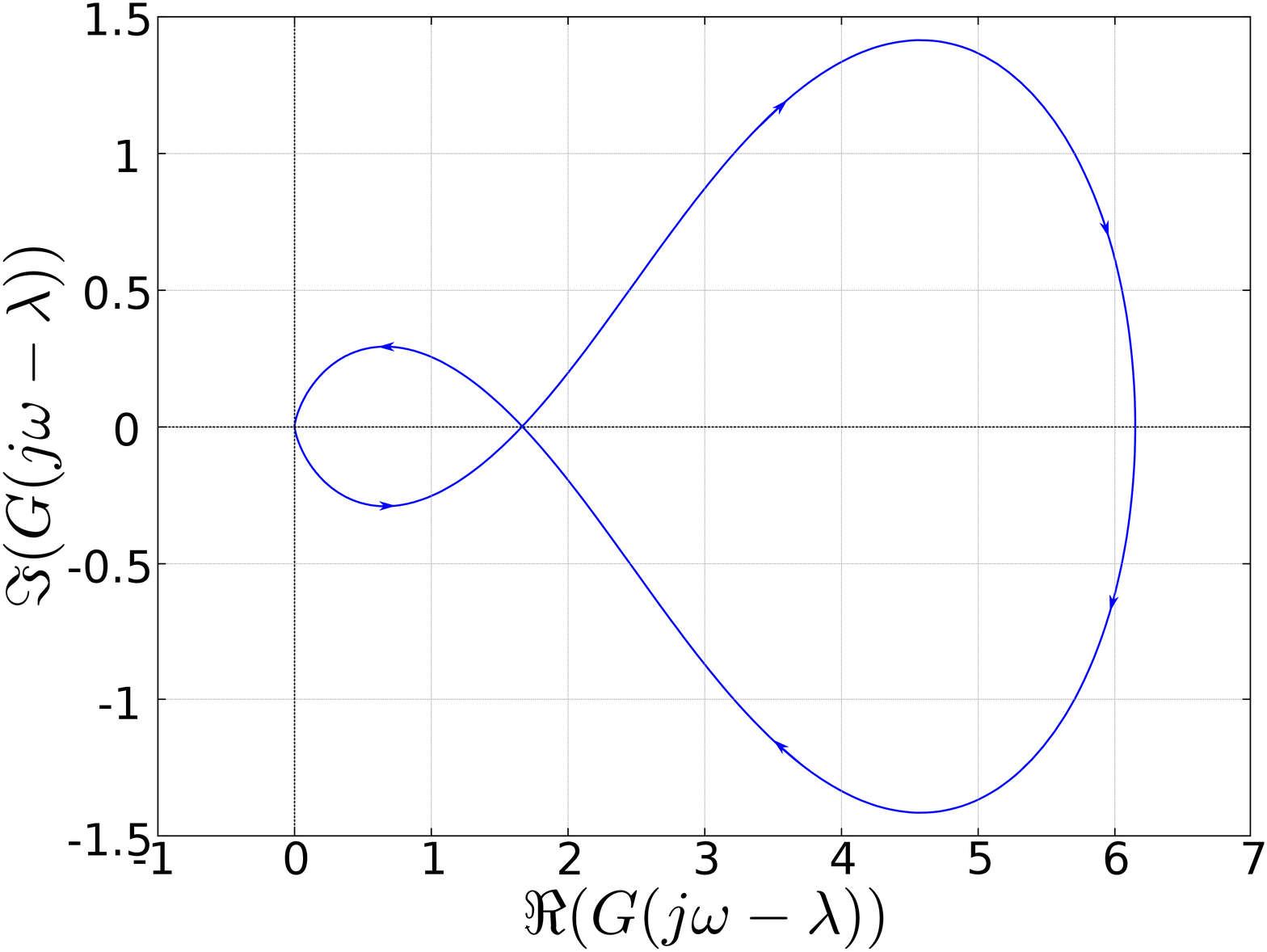}
			\caption{Modified Nyquist diagram with $\lambda = 2.5$ of the system 
				\eqref{eq:nyquist:sys}.}
			\label{fig:nyquist_ex}
		\end{figure}
\end{example}
The Nyquist criterion for dominance provides graphical conditions for dominance of the feedback system.
 More fundamentally, Theorem \ref{theorem:nyquistCriterion} 
shows that dominance, like stability,  is shaped by feedback. 
Compensators can be introduced to shape the open loop transfer function 
with the goal of modulating the degree of dominance of the closed loop system. Likewise, dominance
can be made {\it robust}, by defining dominance margins in the same way as
stability margins. 
\subsection{Kalman-Yakubovich-Popov lemma} 
A key feature of dissipativity is the equivalence of the property in the time domain and in the frequency domain,
characterized by the celebrated Kalman-Yakubovich-Popov lemma \cite{rantzer1996}. 
The same equivalence holds for $p$-dissipativity.
\begin{theorem}[KYP lemma for $p$-dissipativity]   $ $	
	A linear system is $p$-dissipative with rate $\lambda$ and supply \eqref{eq:supply} 
	if and only if, for all $\omega \in \RE \cup \{\infty \}$ with
	$\det(j \omega I - (A + \lambda I)) \neq 0$, its shifted transfer
	function $G(s-\lambda)$ has $p$ poles in $\mathbb{C}_{+}$ and satisfies
	\begin{equation}
		\begin{bmatrix}
			G(j \omega-\lambda) \\ I
		\end{bmatrix}^{*} 
		\begin{bmatrix}
			Q & L 
			\\
			L^{\top} & R
		\end{bmatrix}
		\begin{bmatrix}
			G(j \omega-\lambda) \\ I
		\end{bmatrix} \geq 0.
		\label{eq:pdissipative:freq}
	\end{equation}
	\label{th:kypLemma}
\end{theorem}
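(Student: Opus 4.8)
The plan is to strip the rate $\lambda$ out of the problem by passing to the \emph{shifted} realization and then to combine the classical Kalman--Yakubovich--Popov lemma with an inertia argument. First I would set $\hat A := A + \lambda I$ and observe that \eqref{eq:p-dissipativity} is exactly the ordinary dissipation inequality $\tfrac{\rd}{\rd t}(x^{\top} P x) + \varepsilon\, x^{\top} x \leq s(w)$ along the trajectories of $\dot x = \hat A x + B u$, $y = C x + D u$, whose transfer function is precisely $G(s-\lambda) = C(sI-\hat A)^{-1}B + D$. Expanding the quadratic forms, $p$-dissipativity with rate $\lambda$ and supply \eqref{eq:supply} is equivalent to feasibility of the state-space LMI \eqref{eq:lmi:pPassive} rewritten in terms of $\hat A$, with $P$ of inertia $(p,0,n-p)$. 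This turns the question into a (sign-indefinite) dissipativity problem at rate $0$ for $(\hat A, B, C, D)$, for which controllability of $(\hat A, B)$ holds, being inherited from the standing minimal-realization assumption (a shift by $\lambda I$ preserves controllability and observability).

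For the frequency-domain equivalence I would invoke the KYP lemma in its sign-indefinite form \cite{rantzer1996}: under controllability of $(\hat A, B)$ and the hypothesis $\det(j\omega I - \hat A) \neq 0$ for all $\omega \in \RE \cup \{\infty\}$, feasibility of \eqref{eq:lmi:pPassive} for \emph{some} symmetric $P$ is equivalent to \eqref{eq:pdissipative:freq}. The classical statement imposes neither definiteness on $P$ nor stability on $\hat A$, so it applies verbatim in the indefinite setting. The $(\Rightarrow)$ half I would in any case establish directly: evaluate the dissipation inequality along the periodic steady-state response $x(t) = (j\omega I - \hat A)^{-1} B u_0\, e^{j\omega t}$, with $y(t) = G(j\omega-\lambda) u_0\, e^{j\omega t}$, and average over one period; the storage term $\tfrac{\rd}{\rd t}(x^{\top} P x)$ integrates to zero since $x^{\top}Px$ is periodic, leaving $\langle s(w)\rangle \geq \varepsilon\, \langle x^{\top} x\rangle \geq 0$, which is \eqref{eq:pdissipative:freq} for the arbitrary amplitude $u_0$.

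The genuinely new ingredient, and the step I expect to be the main obstacle, is matching the inertia $(p,0,n-p)$ of $P$ with the claim that $G(s-\lambda)$ has exactly $p$ poles in $\mathbb{C}_{+}$; under minimality these poles are the eigenvalues of $\hat A$, so this is a spectral-splitting statement. In the converse direction the KYP lemma only delivers \emph{some} $P$, and its inertia must be pinned down. When the state block of \eqref{eq:lmi:pPassive} reduces to a strict Lyapunov-type inequality $\hat A^{\top} P + P \hat A < 0$, the generalized (Ostrowski--Schneider) inertia theorem underlying \cite[Theorem 1]{forni2017} forces the inertia of $P$ to count the stable and unstable eigenvalues of $\hat A$, yielding the $p$-pole count, and conversely. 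The difficulty is that the $(1,1)$ block of \eqref{eq:lmi:pPassive} equals $\hat A^{\top} P + P \hat A - C^{\top} Q C + \varepsilon I$, which is a pure Lyapunov expression only when $Q \preceq 0$; for a general indefinite supply the $C^{\top} Q C$ term must first be absorbed.

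I would resolve this last point either by a loop transformation that renders the supply sign-definite without altering the spectrum of $\hat A$, or by appealing to the Hamiltonian-matrix / algebraic-Riccati structure associated with \eqref{eq:lmi:pPassive}: its stable and antistable invariant subspaces partition the spectrum of $\hat A$ into its $\mathbb{C}_{-}$ and $\mathbb{C}_{+}$ parts, and the relevant Hermitian solution of the inequality has inertia equal to the prescribed $(p,0,n-p)$. An alternative route for the pole count, consistent with the earlier development, is to read it off the argument principle along $\boundary\Omega_{\lambda}$ as in Theorem \ref{theorem:nyquistCriterion}. Either way, once the inertia of $P$ is identified with the $\mathbb{C}_{+}$-spectrum of $\hat A$, the two directions close: $p$-dissipativity gives both \eqref{eq:pdissipative:freq} and the $p$-pole count, while \eqref{eq:pdissipative:freq} together with the $p$-pole hypothesis produces, via KYP, a feasible $P$ of the required inertia.
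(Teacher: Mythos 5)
Your proposal follows the same skeleton as the paper's proof: the forward direction is the standard harmonic-input computation for the shifted system $\dot x = (A+\lambda I)x + Bu$ (the paper evaluates an algebraic identity along $x^{\lambda}(t) = e^{j\omega t}(j\omega I - A - \lambda I)^{-1}Bu_0$ and combines it with \eqref{eq:lmi:pPassive}, rather than time-averaging over a period, but this is the same argument), and the converse is delegated to the classical sign-indefinite KYP lemma of \cite{rantzer1996,balakrishnan2003}. Where you genuinely go beyond the paper is the inertia/pole-count matching: the paper's ``Sufficiency'' half never actually establishes that $G(s-\lambda)$ has $p$ poles in $\mathbb{C}_{+}$, and its ``Necessity'' half is a one-line citation that produces \emph{some} symmetric $P$ while saying nothing about its inertia being $(p,0,n-p)$. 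Your observation that the $(1,1)$ block of \eqref{eq:lmi:pPassive} is a genuine Lyapunov inequality for $A+\lambda I$ only when $Q \leq 0$ --- so that the Ostrowski--Schneider inertia theorem ties the inertia of $P$ to the spectral splitting of $A+\lambda I$ --- and that for indefinite $Q$ the term $C^{\top}QC$ must first be absorbed (by loop transformation or via the Hamiltonian/Riccati structure) identifies precisely the step the paper leaves implicit; indeed the paper itself only claims the pole split ``for $Q=0$'' in the discussion after \eqref{eq:lmi:pPassive}, and its subsequent corollaries (Corollary \ref{cor:quasiCircle} onward) assume $Q=0$. So your plan is, if anything, more complete than the published proof; the one caution is that your resolution of the indefinite-$Q$ case is still a sketch, and as stated the theorem really does need either that extra argument or a sign restriction on $Q$ to be airtight.
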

\begin{proofof}\emph{Theorem \ref{th:kypLemma}.}
	\underline{Sufficiency}: Assume that the system is $p$-dissipative, and consider the complex
	input $u(t) := \exp(j \omega t)u_{0}$, $u_{0} \in U$. It follows that 
	$x^{\lambda}(t) := \exp(j \omega t) (j \omega I - (A + \lambda I))^{-1} B u_{0}$ 
	is a solution of the shifted system $\dot{x} = A_\lambda x + Bu$, $y= Cx + Du$, where 
	$A_{\lambda} := A + \lambda I$. The following identity must then hold
	(we remove the dependence on $t$ for readability):
	\begin{align*}
		x^{\lambda *} P B u + u^{*} B^{\top} P x^{\lambda}
		& = u_{0}^{\top}
		B^{\top} \left( -j \omega - 
		A_{\lambda}^{\top} \right)^{-1} P B u_{0} 
		\\
	& + u_{0}^{\top} B^{\top} P \left(j \omega I - A_{\lambda} \right)^{-1} B u_{0}
		\\
		& = - x^{\lambda *} \left( A_{\lambda}^{\top} P + P 
		A_{\lambda} \right) x^{\lambda}
	\end{align*}
	From this and \eqref{eq:lmi:pPassive} it follows that
	\begin{align*}
		0 & \geq \begin{bmatrix}
			x^{\lambda} \\ u
		\end{bmatrix}^{*} 
		\left(
		\begin{bmatrix}
			\left( A + \lambda I \right)^{\top} P + P (A + \lambda I) & PB
			\\
			B^{\top} P & 0
		\end{bmatrix} 
		\right.
		\\
		& - \left. 
		\begin{bmatrix}
			C^{\top} & 0
			\\
			D^{\top} & I
		\end{bmatrix}
		\begin{bmatrix}
			Q & L
			\\
			L^{\top} & R
		\end{bmatrix}
		\begin{bmatrix}
			C & D
			\\
			0 & I
		\end{bmatrix}
		\right)
		\begin{bmatrix}
			x^{\lambda} \\ u
		\end{bmatrix} 
		\\
		& = 
		- u_{0}^{\top} \begin{bmatrix}
			G(j \omega-\lambda) \\ I
		\end{bmatrix}^{*}
		\begin{bmatrix}
			Q & L
			\\
			L^{\top} & R
		\end{bmatrix}
		\begin{bmatrix}
			G(j \omega-\lambda) \\ I
		\end{bmatrix} u_{0}.
	\end{align*}
	Recalling that $u_{0} \in U$ is arbitrary, \eqref{eq:pdissipative:freq} follows.
	
	\underline{Necessity}: as in the works \cite{balakrishnan2003, rantzer1996}.
\end{proofof}

If  the $p$-dissipativity is strict, then \eqref{eq:pdissipative:freq} is also
a strict inequality \cite{balakrishnan2003}.
For $p = 0$ Theorem \ref{th:kypLemma}
reduces to the standard KYP lemma, where the matrix $P$ is positive semidefinite.
A classical result from the theory of passive systems, $Q = R = 0$ and $L = I$, 
states that the transfer function of a passive system is positive real. 
For $p$-passivity we still have that 
$$
G(j \omega-\lambda)^* + G(j \omega-\lambda) \geq 0 \qquad \forall \omega \in \RE \cup \{\infty\} \, .
$$
The reader will notice this positive realness property on  the Nyquist locus in Figure \ref{fig:nyquist_ex}.
\subsection{Geometric conditions for $p$-passivity}
The Kalman-Yakubovich-Popov lemma provides necessary geometric conditions 
for $p$-passive systems in terms of the relative degree of the transfer functions and
of the position of its zeros. For the next theorem we assume a single-input single-output system.
\begin{theorem} \label{theorem:p:geom}
	Let \eqref{eq:linear-open} be a SISO $p$-passive system with rate $\lambda$. Then the relative degree
	$\Delta$
	of its transfer function $G(s)$	is less than or equal to $2p + 1$ and it satisfies
	\begin{equation}
		r \in \left[ p - \frac{\Delta + 1}{2}, p - \frac{\Delta - 1}{2} \right] \cap \mathbb{Z}_{+},
		\label{eq:nqrp:formula}
	\end{equation}
	where,
	\begin{enumerate}
		\item[] $\Delta$ = relative degree of $G(s)$, i.e., $\Delta = n - q$.
		\item[] $n$ = total number of poles of $G(s-\lambda)$.
		\item[] $q$ = total number of finite zeros of $G(s-\lambda)$.
		\item[] $p$ = number of poles of $G(s-\lambda)$ in $\mathbb{C}_{+}$.
		\item[] $r$ = number of finite zeros of $G(s-\lambda)$ in $\mathbb{C}_{+}$.
	\end{enumerate}
\end{theorem}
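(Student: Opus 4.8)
The plan is to translate the frequency-domain characterization of $p$-passivity into a bound on the net phase variation of the shifted transfer function, and then to read off \eqref{eq:nqrp:formula} from Cauchy's argument principle. First I would specialize the KYP lemma for $p$-dissipativity (Theorem \ref{th:kypLemma}) to $Q=R=0$, $L=I$. For a SISO system this yields that $H(s) := G(s-\lambda)$ has exactly $p$ poles in $\mathbb{C}_{+}$ and that $H(j\omega)^{*}+H(j\omega) = 2\,\Re\{H(j\omega)\} \geq 0$ for all $\omega \in \RE \cup \{\infty\}$. The inertia constraint $(p,0,n-p)$ on $P$ rules out poles of $H$ on the $j\omega$-axis, so $H$ has $p$ poles in $\mathbb{C}_{+}$ and $n-p$ in $\mathbb{C}_{-}$; moreover the relative degree $\Delta$ is invariant under the shift $s\mapsto s-\lambda$, hence it is also the relative degree of $H$.

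Next I would compute the net change of $\arg H(j\omega)$ as $\omega$ runs from $-\infty$ to $+\infty$, exactly as in the proof of Theorem \ref{theorem:nyquistCriterion}. Writing $H$ as a ratio of real polynomials and using that a factor $(j\omega - a)$ contributes a net phase rotation of $+\pi$ when $\Re\{a\}<0$ and $-\pi$ when $\Re\{a\}>0$, the $r$ zeros and $p$ poles in $\mathbb{C}_{+}$ (with the remaining factors lying in $\mathbb{C}_{-}$) give
\begin{equation*}
\Delta\arg H(j\omega)\big|_{-\infty}^{+\infty} = \pi\big(2(p-r)-\Delta\big),
\end{equation*}
where I have used $q-n=-\Delta$ to collect the left half-plane contributions. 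The constant leading coefficient is real, so it adds no net rotation.

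The decisive step is to bound this net phase. Since $\Re\{H(j\omega)\}\geq 0$, the Nyquist locus of $H$ stays in the closed right half-plane, so $\arg H(j\omega)$ is confined to $[-\pi/2,\pi/2]$; being the difference of two endpoint values in an interval of length $\pi$, the net phase change satisfies $|\Delta\arg H|\leq\pi$ (consistently, the conjugate symmetry $H(-j\omega)=\overline{H(j\omega)}$ makes the two endpoint phases opposite). Combining with the phase formula gives $|2(p-r)-\Delta|\leq 1$, that is,
\begin{equation*}
p-\frac{\Delta+1}{2}\;\leq\; r\;\leq\; p-\frac{\Delta-1}{2},
\end{equation*}
which is \eqref{eq:nqrp:formula}. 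Finally, since $r$ is a nonnegative integer the upper bound forces $p-(\Delta-1)/2\geq 0$, i.e. $\Delta\leq 2p+1$, yielding the relative-degree bound.

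The main obstacle I anticipate is the treatment of critical points on the imaginary axis. Poles there are excluded by the inertia hypothesis, but a positive-real $H$ may carry simple zeros on the $j\omega$-axis, where the locus passes through the origin and $\arg H$ is ill defined. I would dispatch these in the standard way, indenting the Nyquist path $\boundary\Omega_{\lambda}$ with small semicircles as already noted after Theorem \ref{theorem:nyquistCriterion}, or equivalently proving the count first in the generic case (no axis zeros) and extending to the boundary by a limiting argument, using that the inequality \eqref{eq:nqrp:formula} is closed. One must also confirm that the value at $\omega=\infty$ enters correctly; this is automatic, since the high-frequency asymptotics $H(j\omega)\sim K(j\omega)^{-\Delta}$ fix the endpoint phase and are already encoded in the relative-degree term of the phase formula.
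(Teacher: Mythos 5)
Your proof is correct, but it follows a genuinely different route from the paper's. Both arguments rest on the same pillar --- the KYP characterization (Theorem \ref{th:kypLemma}) specialized to $Q=R=0$, $L=I$, which forces $\Re\{G(j\omega-\lambda)\}\geq 0$ --- but they exploit it differently. The paper closes the loop with a static gain $K\geq 0$ (which preserves $p$-passivity), invokes a root-locus argument for $K$ large so that $q$ closed-loop poles cancel the $q$ zeros, and then compares the \emph{approximate} phase of the residual factor $1/d_2(j\omega-\lambda)$ at $\omega=0$ against its limit as $\omega\to\infty$, both confined by positive realness. You instead work directly on $G(s-\lambda)$: you compute the exact net phase variation $\pi\bigl(2(p-r)-\Delta\bigr)$ over $\omega\in(-\infty,+\infty)$ by summing the $\pm\pi$ contributions of each pole and zero, and bound it by $\pi$ because the locus is trapped in the closed right half-plane, so the continuous argument cannot leave a single branch interval of length $\pi$. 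What your approach buys is rigor and economy: there is no auxiliary closed loop, no "$K$ sufficiently large" approximation, and the inequality $|2(p-r)-\Delta|\leq 1$ drops out as an exact statement rather than from asymptotic phase matching. What the paper's approach buys is the root-locus intuition (the zeros "absorb" poles under high gain, exposing the excess poles $d_2$ as the only phase contributors), which connects the result to classical loop-shaping reasoning. Two shared caveats, which you correctly flag: zeros of $G(s-\lambda)$ on the imaginary axis require indentation or a limiting argument (the paper is silent on this), and both proofs implicitly assume no poles on $\boundary\Omega_\lambda$, which is automatic only in the strict case.
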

\begin{proofof}\emph{Theorem \ref{theorem:p:geom}.}
	We first recall that static feedback does not affect neither the relative degree, nor
	the position of the zeros of a linear system. Secondly, static negative output feedback of
	a $p$-passive system preserves the $p$-passivity of the closed-loop. Hence, the 
	closed-loop transfer function 
	\begin{displaymath}
		H(s) = \frac{G(s)}{1 + K G(s)} = \kappa \frac{n(s)}{d_{1}(s) d_{2}(s)},
	\end{displaymath}
	is $p$-passive whenever $G(s)$ is $p$-passive for any $K \geq 0$, where $n(s)$, $d_{1}(s)$ and
	$d_{2}(s)$ are polynomials of degree $q$, $q$ and $n-q$ respectively.
	Moreover, as $K$ increases, $q$ poles of $H(s)$ become closer to its $q$ zeros, i.e., 
	$\frac{n(s)}{d_{1}(s)} \to 1$ as $K \to +\infty$ . This last fact
	is commonly used in the construction of the root-locus \cite[Chapter 6]{engelberg2005}.
	Assuming that $K$ is sufficiently large, the phase of $H(s)$ \emph{approximates} to
	\begin{displaymath}
		\arg(H(j \omega - \lambda)) = \arg(\kappa) - \arg(d_{2}(j \omega - \lambda)).
	\end{displaymath}
	Because $r$ poles of $H(s)$ move towards $r$ zeros
	in $\Omega_{\lambda}$ and $H(s)$ is $p$-passive, it follows that $p-r$ roots of $d_{2}(s-\lambda)$
	must lie in the interior of $\mathbb{C}_{+}$. As a  consequence
	\begin{displaymath}
		\arg(H(j 0 - \lambda)) = \arg(\kappa) - (p - r) \pi
	\end{displaymath}
	and 
	\begin{displaymath}
		\arg(H(j \omega - \lambda)) \to \arg(\kappa) - \Delta \frac{\pi}{2}, \text{ as } \omega
		\to +\infty.
	\end{displaymath}
	By Theorem \ref{th:kypLemma}, it follows that $p$-passivity of $H(s)$ implies
	$\arg(H(j 0 - \lambda)) = 2 \pi \eta$
	for some $\eta \in \mathbb{Z}$ and $\arg(H(j \omega - \lambda)) \in [-\frac{\pi}{2}, \frac{\pi}{2}] +
	2 \pi \eta$ for all $\omega \in \RE \cup \{\infty\}$. Hence,
	\begin{multline*}
		\arg(H(j \infty-\lambda)) \in \left[ -\frac{\pi}{2}, \frac{\pi}{2} \right]
		+ \arg(H(j 0-\lambda)),
	\end{multline*}
	or equivalently
	\begin{displaymath}
		- \Delta \frac{\pi}{2} \in \left[ -\frac{\pi}{2}, \frac{\pi}{2} \right] - (p - r) \pi.
	\end{displaymath}
	By noticing that $\Delta,r$ and $p$ are all positive integers, the previous condition transforms into
	\eqref{eq:nqrp:formula}. To conclude the proof, assume by contradiction that 
	$\Delta = 2p + \nu$, where $\nu \geq 2$, $\nu \in \mathbb{N}$. It follows that 
	the right-hand side of \eqref{eq:nqrp:formula} becomes the empty set, and
	we get the desired contradiction. Therefore, $\Delta \leq 2p + 1$.
	This ends the proof.
\end{proofof}

Table \ref{table:ppassivity:conditions} shows necessary conditions for $p$-passivity with
rate $\lambda$ in terms of $n, q, r$, for several values of the dominance degree $p$. 
In the case of classical passivity, that is, $0$-passivity, we recover the well-known necessary condition
of passivity: a passive system must be minimum phase ($r = 0$) and its relative degree is at most one ($n - q \leq 1$). 
The poles and zeros in Table \ref{table:ppassivity:conditions} refer to
$G(s-\lambda)$. In consequence their position in the
complex plane depend on the value of $\lambda$. 
\begin{table}[htpb]
\centering
\begin{tabular}{ |c|c|c|c|c| }
\hline
$\mathbf{n - q}$ & $\mathbf{p = 0}$ & $\mathbf{p = 1}$ & $\mathbf{p = 2}$ & $\mathbf{p = 3}$ \\
\hline
\textbf{0} & $r \in \{0\}$ & $r \in \{1\}$ & $r \in \{2\}$ & $r \in \{3\}$ \\
\textbf{1} & $r \in \{0\}$ & $r \in \{0, 1\}$ & $r \in \{1,2\}$ & $r \in \{2, 3\}$ \\
\textbf{2} & \cellcolor[HTML]{CCCCCC} & $r \in \{0\}$ & $r \in \{1\}$ & $r \in \{2\}$ \\
\textbf{3} & \cellcolor[HTML]{CCCCCC} & $r \in \{0\}$ & $r \in \{0, 1\}$ & $r \in \{1,2\}$ \\
\textbf{4} & \cellcolor[HTML]{CCCCCC} & \cellcolor[HTML]{CCCCCC} & $r \in \{0\}$ & $r \in \{1\}$ \\
\textbf{5} & \cellcolor[HTML]{CCCCCC} & \cellcolor[HTML]{CCCCCC} & $r \in \{0\}$ & $r \in \{0,1\}$ \\
\textbf{6} & \cellcolor[HTML]{CCCCCC} & \cellcolor[HTML]{CCCCCC} & \cellcolor[HTML]{CCCCCC} & $r \in \{0\}$ \\
\textbf{7} & \cellcolor[HTML]{CCCCCC} & \cellcolor[HTML]{CCCCCC} & \cellcolor[HTML]{CCCCCC} & $r \in \{0\}$ \\
\hline 
\end{tabular}
\vspace{1mm}
\caption{Necessary conditions for $p$-passivity in terms of the relative degree, and the poles
	and zeros of $G(s-\lambda)$ in $\mathbb{C}_{+}$.}
	\label{table:ppassivity:conditions}
\end{table}
\begin{example} \label{ex:3-poles}
	Consider a system with $3$ real poles, as
	\begin{equation}
		G(s) = \frac{M}{(s + \beta_{1}) (s + \beta_{2}) (s + \beta_{3})},
		\label{eq:bistable:limitcycle}
	\end{equation}
	and assume for simplicity that $0 < \beta_{1} < \beta_{2} < \beta_{3}$.
	By noticing that $G(s)$ has relative degree $n - q = 3$ and $r = 0$, it follows from Table 
	\ref{table:ppassivity:conditions} that $G(s)$ is only compatible with 
	$1$-passivity and $2$-passivity. For instance, simple computations yield,
	\begin{multline}
		\Re \{G(j \omega-\lambda)\} =
		\\
		M \frac{ \left( 3 \lambda - \beta_{1} - \beta_{2} - 
		\beta_{3} \right) \omega^2 + (\beta_{1} - \lambda)(\beta_{2} - \lambda)(\beta_{3} -
	\lambda)}{\left[ \omega^{2} + 
		( \beta_{1} - \lambda )^{2} \right] \left[ \omega^{2} + (\beta_{2} - \lambda)^{2}
		\right] \left[ \omega^{2} + (\beta_{3} - \lambda)^{2} \right] }.
		\label{eq:real:G_jw}
	\end{multline}
	Consider $M < 0$. By Theorem \ref{th:kypLemma},
	 a sufficient condition for $1$-passivity of $G(s)$ with rate $\lambda$ is
	\begin{equation}
		\beta_{1} < \lambda < \min \left\{ \beta_{2}, \frac{\beta_{1} + \beta_{2} + \beta_{3}}{3}
		\right\}.
		\label{eq:lambda:1-passive}
	\end{equation}
	Consider now $M > 0$. If
	\begin{equation}
		\max \left\{ \beta_{2}, \frac{\beta_{1} + \beta_{2} + \beta_{3}}{3} \right\}
		< \lambda < \beta_{3},
		\label{eq:lambda:2-passive}
	\end{equation}
	then $G(s)$ is $2$-passive. We will return to this example in the next section, to illustrate the
	design of a bistable system and of a system with a periodic attractor.
\end{example}
\section{Differential analysis of Lur'e feedback systems in the frequency domain}
\label{section:lure:frequency}
\subsection{Differential analysis of  Lur'e systems}
We will now apply the theory of the paper to a differential analysis of  Lur'e systems.
We look for conditions under which a Lur'e feedback system formed by a linear 
time-invariant system in feedback interconnection with a memoryless, time-independent nonlinearity
\begin{equation}
\label{eq:lureSys}
	\begin{cases}
		\dot{x} = A x + B u
		\\
		y = C x + D u
		\\
		u = - \varphi(y) \ .
	\end{cases}
\end{equation}
The prolonged system is given by \eqref{eq:lureSys} and 
\begin{equation}
\label{eq:diff_lureSys}
	\begin{cases}
		\dot{\delta x} = A \delta x + B \delta u
		\\
		\delta y = C \delta x + D \delta u
		\\
		\delta u = - \partial\varphi(y) \delta y \ .
	\end{cases}
\end{equation}
In the differential setting the usual sector condition on the nonlinearity
is replaced by a sector condition on its linearization.
Namely, the absolute $p$-dominance problem consists in finding conditions
under which  the linear time-invariant system in
feedback interconnection with memoryless, time-independent nonlinearity that satisfies
the differential sector condition
\begin{equation}
	\left( \partial\varphi(y)\delta y - K_{1} \delta y \right)^{\top} 
	\left( \partial\varphi(y)\delta y - K_{2} \delta y \right) \leq 0
	\label{eq:sector:condition}
\end{equation}
is $p$-dominant. The usual assumption is that $K_{1}$ and $K_{2}$ are matrices such that
$K_{2} - K_{1}$ is symmetric positive definite.  
We observe that in the scalar case the sector condition \eqref{eq:sector:condition}
is in fact equivalent to require that $\partial \varphi(y) \in [K_{1}, K_{2}]$ for all $y \in \RE$. 
Indeed, in the differential setting, the sector condition is a restriction on the slope of the nonlinearity.
Henceforth, we adopt the notation $\partial \varphi(y) \in [K_{1}, K_{2}]$ to denote the sector condition
\eqref{eq:sector:condition}.
\subsection{Kalman's conjecture and conditions for $p$-dominance of Lur'e systems}
An obvious {\it necessary} condition for $p$-dominance of  \eqref{eq:lureSys} with the differential
sector condition $\partial \varphi(y) \in [K_{1}, K_{2}]$ is $p$-dominance of the linear system
\begin{equation}
\label{eq:diff_lureSysbis}
	\begin{cases}
		\dot{\delta x} = A \delta x + B \delta u
		\\
		\delta y = C \delta x + D \delta u
		\\
		\delta u = - K \delta y \ .
	\end{cases}
\end{equation}
for any matrix $K$ satisfying $K \in [K_{1}, K_{2}]$. A famous conjecture formulated by Kalman in 1957 is
that this necessary condition is also sufficient.  
The following counterexample from \cite{bragin2011} shows that Kalman's 
conjecture fails in dimension $n = 4$.
\begin{example} \label{ex:kalman:counterexample}
	Consider a linear system of the form \eqref{eq:lureSys} where
	\begin{align}
		\label{eq:kalman:counter}
		\nonumber A & := 
		\begin{bmatrix}
			0 & -1 & 0 & 0
			\\
			1 & 0 & 0 & 0
			\\
			0 & 0 & 0 & 1
			\\
			0 & 0 & -1 & -1
		\end{bmatrix}, \;
		B := 
		\begin{bmatrix}
			10 \\ 10.1 \\ 0 \\ -1
		\end{bmatrix}, \\
		C & := 
		\begin{bmatrix}
			1 & 0 & -10.1 & -0.1
		\end{bmatrix}, \; D := 0.
	\end{align}
	Consider the static nonlinearity $\varphi(y) = \tanh(y) + \varepsilon y$, 
	where $\varepsilon = 0.01$. $\partial \varphi$ belongs to the
	sector $[\varepsilon, 1 + \varepsilon]$. Furthermore, the closed loop 
	of the linear system with $u = -K y$ is asymptotically stable for any 
	$K \in [0, +\infty)$. Nevertheless, simulations show that for the initial condition
	$x_{0} = [3, -3, -2, 0]^{\top}$ the system's trajectories display an oscillatory behavior,
	contradicting the stability claimed by the conjecture.
\end{example}
Kalman's conjecture is an early example of differential analysis. 
Writing down the conditions of the conjecture within an LMI formulation
shows that the condition of asymptotic stability of the linear systems with 
feedback gains $K \in [K_{1}, K_{2}]$ is equivalent to the existence of a family of symmetric 
positive definite matrices $P_{K}$ such that
\begin{equation}
	\left( A + B K C \right)^{\top} P_{K} 
	+ P_{K} \left( A + B K C \right) < 0,
	\label{eq:lmi:kalman}
\end{equation}
for $K \in [K_{1}, K_{2}]$. The difference between Kalman's 
conjecture and \eqref{eq:LMI-dominance:diff} for strict $0$-dominance is that  (\ref{eq:lmi:kalman}) does not
enforce a {\it constant} solution of the LMI. The dominance is only imposed {\it pointwise}, allowing for a different matrix $P_K$ for each
gain $K$. Indeed, Kalman conjecture provides a necessary condition for $0$-dominance. 
Searching for a constant solution $P$ in \eqref{eq:LMI-dominance:diff} enforces a stronger condition
than stability; it entails contraction.
The breakthrough in the history of absolute stability theory came from the circle criterion, which connects
a frequency domain property of the LTI system to a dissipativity property of the nonlinearity.
The theorem below provides the analog sufficient condition for the $p$-dominance
of Lur'e systems.
\begin{theorem} \label{th:quasiCircle}
	Consider a strictly $p$-dissipative linear system \eqref{eq:linear-open} 
	with rate $\lambda$ and supply \eqref{eq:supply}.
	Then, the feedback system \eqref{eq:lureSys} is strictly $p$-dominant with rate $\lambda$
	for all nonlinearities $\varphi$ that satisfy 
	\begin{equation}
		\begin{bmatrix}
			I\\ - \partial\varphi(y) 
		\end{bmatrix}^{\top}
		\begin{bmatrix}
			Q & L \\ L^{\top} & R
		\end{bmatrix}
		\begin{bmatrix}
			I \\ - \partial\varphi(y) 
		\end{bmatrix} \leq 0 \qquad \forall y \in \RE^{m_1} .
	\label{eq:generalised:sector:condition}
	\end{equation}
\end{theorem}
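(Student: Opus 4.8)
The plan is to reduce absolute $p$-dominance of the closed loop to the $p$-dissipativity hypothesis combined with the differential sector condition, working entirely at the level of the prolonged (variational) system \eqref{eq:diff_lureSys}. The key observation is that the generalised sector condition \eqref{eq:generalised:sector:condition}, together with the feedback law $\delta u = -\partial\varphi(y)\delta y$, certifies that the pair $(\delta y, \delta u)$ makes the supply rate $s(\delta w)$ nonnegative along every variational trajectory. First I would introduce the variational supply
\begin{equation}
\label{eq:varsupply}
s(\delta w) := \begin{bmatrix} \delta y \\ \delta u \end{bmatrix}^{\top}
\begin{bmatrix} Q & L \\ L^{\top} & R \end{bmatrix}
\begin{bmatrix} \delta y \\ \delta u \end{bmatrix},
\end{equation}
and verify that substituting $\delta u = -\partial\varphi(y)\delta y$ gives exactly
$s(\delta w) = \delta y^{\top} M(y)\, \delta y$, where $M(y)$ is the congruence of the supply matrix by $[I \ ; -\partial\varphi(y)]$ appearing in \eqref{eq:generalised:sector:condition}. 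Hence \eqref{eq:generalised:sector:condition} is precisely the statement $s(\delta w)\le 0$ for all admissible $\delta y$.

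Next I would combine this with strict $p$-dissipativity. By Definition \ref{def:p-dissipativity} applied to the variational dynamics \eqref{eq:diff_lureSys}, there is a constant symmetric $P$ of inertia $(p,0,n-p)$ and $\varepsilon>0$ with
\begin{equation}
\label{eq:varsupplyineq}
\begin{bmatrix} \dot{\delta x} \\ \delta x \end{bmatrix}^{\top}
\begin{bmatrix} 0 & P \\ P & 2\lambda P + \varepsilon I \end{bmatrix}
\begin{bmatrix} \dot{\delta x} \\ \delta x \end{bmatrix}
\leq s(\delta w)
\end{equation}
along every solution of the open variational system. Because the closed-loop feedback forces $s(\delta w)\le 0$ by the argument above, the right-hand side of \eqref{eq:varsupplyineq} is nonpositive, so the variational conic constraint \eqref{eq:diff-p-dominance} holds with the same $P$ and $\varepsilon>0$. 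Since $P$ is constant with the prescribed inertia and the inequality is uniform over all base points $(x,\delta x)$, this is exactly strict $p$-dominance of the closed-loop prolonged system in the sense of Definition \ref{def:diff-p-dominance}, which is the claim.

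The main obstacle, and the point deserving the most care, is the matching of index sets: the $p$-dissipativity inequality is assumed for the open system \eqref{eq:linear-open} for \emph{all} $(x,\delta x,\delta w)$, whereas the closed loop constrains $\delta w$ to the graph of $-\partial\varphi(y)$. I would argue that this restriction is harmless precisely because the dissipation inequality holds for every $\delta w$, so it holds a fortiori on the feedback-constrained subset; the nonlinearity only enters through making the supply sign-definite. A secondary subtlety is that $\partial\varphi(y)$ depends on the current output $y=Cx+Du$, so the supply matrix $M(y)$ varies with the base trajectory; this is why one needs \eqref{eq:generalised:sector:condition} to hold for \emph{all} $y\in\RE^{m_1}$, guaranteeing $s(\delta w)\le 0$ uniformly and thereby preserving the \emph{constant} storage $P$ required by the inertia constraint. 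Finally I would remark that strictness is inherited directly from the strict $p$-dissipativity ($\varepsilon>0$), so that Theorem \ref{thm:reduced} and Corollary \ref{thm:asymptotic_behavior} apply to the closed loop.
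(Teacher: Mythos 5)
Your proof is correct and takes essentially the same route as the paper's own proof: both apply the dissipation inequality of the linear dynamics at the variational level, substitute the feedback $\delta u = -\partial\varphi(y)\,\delta y$, and invoke \eqref{eq:generalised:sector:condition} to make the supply nonpositive, so that \eqref{eq:diff-p-dominance} holds with the same constant $P$ and $\varepsilon>0$. Your closing remarks on why restricting $\delta w$ to the feedback graph is harmless and why uniformity in $y$ matters are points the paper leaves implicit, but they do not change the argument.
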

\begin{proofof}\emph{Theorem \ref{th:quasiCircle}.}
	From the strict dissipativity of the linear dynamics
	we have that the prolonged dynamics satisfies
\vspace{-8mm}{\small{$$
 		\begin{bmatrix}
			A \delta x \!+\! B \delta u \\ \delta x
		\end{bmatrix}^{\!\!\!\top} \!\!\!\!
		\begin{bmatrix}
			0 & P \\ P \!&\! 2\lambda P \!+\! \varepsilon I\!
		\end{bmatrix}\!\!\!
		\begin{bmatrix}
			A \delta x \!+\! B \delta u\\ \delta x
		\end{bmatrix}
		\!\leq\!
 		\begin{bmatrix}
			\!\delta y\! \\ \!\delta u\!
		\end{bmatrix}^{\!\!\!\top} \!\!\!\!
		\begin{bmatrix}
			\!Q\! & \!L\! \\ \!L^{\top}\! & \!R\!
		\end{bmatrix}\!\!\!
		\begin{bmatrix}
			\!\delta y\!\! \\ \!\delta u\!\!
		\end{bmatrix}\!.
$$}}
For $\delta u = -\partial \varphi(y) \delta y$,
using \eqref{eq:generalised:sector:condition} we get
$$
 		\begin{bmatrix}
			A \delta x \! - \! B \partial \varphi(y) \delta y \\ \delta x
		\end{bmatrix}^{\!\!\top} \!\!\!
		\begin{bmatrix}
			0 & P \\ P & 2\lambda P \!+\! \varepsilon I
		\end{bmatrix}\!\!
		\begin{bmatrix}
			A \delta x \! - \! B \partial \varphi(y) \delta y\\ \delta x
		\end{bmatrix}
		\!\!\leq \!0 
$$	
which shows that \eqref{eq:diff-p-dominance} holds along the trajectories
of the prolonged closed-loop system \eqref{eq:lureSys},\eqref{eq:diff_lureSys}.
\end{proofof}

Condition \eqref{eq:generalised:sector:condition} is in fact a generalized
sector condition for the linearization $\partial \varphi(y) \delta y$. 
In fact, for $Q = K_{1}^{\top} K_{2} + K_{2} K_{1}^{\top}$, 
$L = K_{1} + K_{2}$ and $R = 2 I$ we recover the differential sector 
condition \eqref{eq:sector:condition}.
We also observe that for $R = 0$, $Q = 0$ and $L = I$ the linear system
is $p$-passive and \eqref{eq:generalised:sector:condition} is equivalent to
the condition $\partial\varphi(y) \delta y \geq 0$.
Indeed, the slope of the nonlinearity belongs to the sector $[0, +\infty)$, that is,
its linearization is $0$-passive (the nonlinearity is monotone).
In this special case, Theorem \ref{th:quasiCircle} reduces to \cite[Proposition 9]{forni2017},
which provides a generalization of the passivity theorem to the context of $p$-dissipativity.
\subsection{The circle criterion for $p$-dominance}
Theorem \ref{th:quasiCircle} allows for a number of useful reformulations in the frequency domain.
\begin{corollary}\label{cor:quasiCircle}
	Let $G(s)$ be the transfer function of \eqref{eq:linear-open} and suppose that $\varphi$ satisfies
	\eqref{eq:generalised:sector:condition} with $Q = 0$. If $G(s)$ has $p$ poles in the interior
	of $\Omega_{\lambda}$, $n-p$ poles in the complement of $\Omega_{\lambda}$, 
	and the transfer function
	\begin{equation} 
		Z(s-\lambda) := L G(s-\lambda) + \frac{1}{2} R \label{eq:generalized:function}
	\end{equation}
	satisfies 
	\begin{equation} \label{eq:composite:passivity}
		Z(j \omega-\lambda) + Z^{*}(j \omega-\lambda) > 0 
	\end{equation}
	for all $\omega \in \RE \cup \{\infty\}$, then the feedback system \eqref{eq:lureSys} is strictly $p$-dominant with
	rate $\lambda$.
\end{corollary}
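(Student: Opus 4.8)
The plan is to recognize Corollary \ref{cor:quasiCircle} as the frequency-domain specialization of Theorem \ref{th:quasiCircle}: I will use the KYP lemma (Theorem \ref{th:kypLemma}) to certify that the linear block \eqref{eq:linear-open} is strictly $p$-dissipative with rate $\lambda$ and supply \eqref{eq:supply} in the special case $Q=0$, and then invoke Theorem \ref{th:quasiCircle} directly, since $\varphi$ is assumed to satisfy the matching sector condition \eqref{eq:generalised:sector:condition}. The proof therefore reduces to checking that the two hypotheses of the corollary are exactly the two hypotheses of the (strict) KYP lemma rewritten in terms of $Z$.

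First I would translate the pole count. Since $\Omega_{\lambda} = \{s\in\mathbb{C}\mid \Re\{s\}>-\lambda\}$, a pole $s_{0}$ of $G(s)$ lies in the interior of $\Omega_{\lambda}$ if and only if the associated pole $s_{0}+\lambda$ of $G(s-\lambda)$ lies in $\mathbb{C}_{+}$. Hence the assumption that $G(s)$ has $p$ poles in the interior of $\Omega_{\lambda}$ and $n-p$ in its complement is precisely the statement that $G(s-\lambda)$ has $p$ poles in $\mathbb{C}_{+}$, which is the pole condition demanded by Theorem \ref{th:kypLemma}.

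The computational heart is to show that \eqref{eq:composite:passivity} is the frequency inequality \eqref{eq:pdissipative:freq} specialized to $Q=0$. Writing $G:=G(j\omega-\lambda)$ and $R=R^{\top}$, expanding the quadratic form in \eqref{eq:pdissipative:freq} with $Q=0$ yields $G^{*}L+L^{\top}G+R$, while the definition \eqref{eq:generalized:function} gives $Z(j\omega-\lambda)+Z^{*}(j\omega-\lambda)=LG+G^{*}L^{\top}+R$. In the SISO case both reduce to $2L\,\Re\{G\}+R$, so the strict positivity \eqref{eq:composite:passivity} is exactly the strict form of \eqref{eq:pdissipative:freq}. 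By the strict KYP lemma, the pole count together with this strict frequency inequality imply that \eqref{eq:linear-open} is strictly $p$-dissipative with rate $\lambda$ and supply \eqref{eq:supply} with $Q=0$. Theorem \ref{th:quasiCircle} then applies verbatim and delivers strict $p$-dominance of \eqref{eq:lureSys} with rate $\lambda$.

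The only delicate point is the matrix bookkeeping in the middle step: matching $Z+Z^{*}$ to the KYP quadratic form requires care with conjugate transposes and with the symmetry of $R$, and in the MIMO setting one must verify that the two Hermitian forms $G^{*}L+L^{\top}G+R$ and $LG+G^{*}L^{\top}+R$ impose the same positivity constraint. Everything else is a direct chaining of two results already established, so I expect no further obstacle.
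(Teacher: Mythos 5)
Your proposal is correct and its skeleton is the same as the paper's: certify strict $p$-dissipativity of the linear block with $Q=0$ via the KYP lemma (Theorem \ref{th:kypLemma}), then conclude by Theorem \ref{th:quasiCircle}. The only structural difference is where the algebra sits. The paper does not apply KYP to $G$ with the supply $(0,L,R)$ directly; it applies KYP to the auxiliary system $\dot{x}=Ax+Bu$, $\bar{y}=Ly+\frac{1}{2}Ru$, whose transfer function is exactly $Z(s)$, with the pure passivity supply $2\bar{y}^{\top}u$, so that \eqref{eq:composite:passivity} is \emph{verbatim} the KYP frequency inequality for that system; the remaining work is then the state-space supply identity
\begin{displaymath}
2\,\delta\bar{y}^{\top}\delta u
= \delta y^{\top}L\,\delta u + \delta u^{\top}L^{\top}\delta y + \delta u^{\top}R\,\delta u ,
\end{displaymath}
which recovers the supply \eqref{eq:supply} with $Q=0$. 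You instead push the same algebra into the frequency domain, matching $Z+Z^{*}$ against \eqref{eq:pdissipative:freq}. Same two lemmas, different bookkeeping.

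On the point you flag as delicate: your caution is warranted, and the verification you postpone cannot in fact be completed for general $L$. The two Hermitian forms differ by
$(L-L^{\top})G - G^{*}(L-L^{\top})$,
which vanishes for all $G$ precisely when $L=L^{\top}$; for non-symmetric $L$ there are $2\times 2$ examples where one form is positive and the other is not, so the two positivity constraints are genuinely inequivalent. This is, however, not a defect of your argument relative to the paper's: the supply identity displayed above suffers from the same issue, since its left-hand side actually equals $2\,\delta y^{\top}L^{\top}\delta u + \delta u^{\top}R\,\delta u$, so the paper's proof also implicitly requires $L=L^{\top}$. Both proofs are therefore complete exactly when $L$ is symmetric --- in particular in the SISO case and in every use the paper makes of the corollary (e.g. $L=K_{2}-K_{1}\succ 0$ in Corollary \ref{cor:circle}) --- and neither covers non-symmetric $L$. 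Under that standing symmetry, your chaining of the two results is airtight.
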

\begin{proofof}\emph{Corollary \ref{cor:quasiCircle}.}
	First note that $Z(s)$ is the transfer function of 
	\begin{equation} 
		 \begin{cases}
			\dot{x} = A x+ B u
			\\
			\bar{y} = L y + \frac{1}{2} R u.
		\end{cases} \label{eq:transformed-linear-system}
	\end{equation}
	Hence, the fact that $G(s)$ has $p$ poles in the interior of $\Omega_{\lambda}$ together with 
	\eqref{eq:composite:passivity} allow us to apply Theorem \ref{th:kypLemma}, which guarantees that the
	prolonged dynamics of \eqref{eq:transformed-linear-system} satisfies
	\begin{displaymath}
		\begin{bmatrix}
			\dot{\delta x} \\ \delta x
		\end{bmatrix}^{\top}
		\begin{bmatrix}
			0 & P 
			\\
			P & 2\lambda P + \varepsilon I
		\end{bmatrix}
		\begin{bmatrix}
			\dot{\delta x} \\ \delta x
		\end{bmatrix} 
		\leq
		\begin{bmatrix}
			\delta \bar{y} \\ \delta u
		\end{bmatrix}^{\top} 
		\begin{bmatrix}
			0 & I
			\\
			I & 0
		\end{bmatrix}
		\begin{bmatrix}
			\delta \bar{y} \\ \delta u
		\end{bmatrix}.
	\end{displaymath}
	Note also that 
	$$
	\mymatrix{c}{\delta \bar{y} \\ \delta u}^T \mymatrix{cc}{0 & I \\ I & 0} \mymatrix{c}{\delta \bar{y} \\ \delta u}
	= \mymatrix{c}{\delta y \\ \delta u}^T \mymatrix{cc}{0 & L \\ L^{\top}& R} \mymatrix{c}{\delta y \\ \delta u} \leq 0
	$$
	Thus, the result follows from Theorem \ref{th:quasiCircle}.
\end{proofof}

For $R = 2 I$, $Q = 0$ and $L = K = K^{\top} > 0$,
$\partial\varphi(y)$ belongs to the sector $[0, K]$ and we recover (and extend to the differential setting)
the standard formulation of the Positivity Theorem \cite[Theorem 5.18]{haddad2008} 
for $\lambda = 0$. It should also be noted that the nonlinearity $\varphi(y) = 0$ 
always satisfies the sector condition
\eqref{eq:generalised:sector:condition} if $Q = 0$. Therefore, for $Q=0$, 
a necessary condition for the $p$-dominance of the closed loop \eqref{eq:lureSys} 
is the $p-$dominance of linear system \eqref{eq:linear-open}.
Corollary \ref{cor:quasiCircle} can be further extended via 
loop transformations, leading to the following reformulation of the circle criterion.
\begin{corollary}
	[Multivariable Circle criterion] \label{cor:circle}
	Let $G(s)$ be the transfer function of \eqref{eq:linear-open} and suppose that 
	the nonlinearity $\varphi$ satisfies \eqref{eq:sector:condition} for
	some matrices $K_{1}, K_{2}$ such that $K_{2} - K_{1}$ is symmetric and positive definite.
	Then, the closed-loop system is strictly $p$-dominant if the transfer function
	\begin{equation}
		\tilde{G}(s) := G(s) (I + K_{1} G(s))^{-1}
		\label{eq:transfer:closed}
	\end{equation}
	has $p$-poles in the interior of $\Omega_{\lambda}$, $n-p$ poles in the complement
	of $\Omega_{\lambda}$, and
	\begin{equation}
		Z(s-\lambda) = \left( I + K_{2} G(s-\lambda) \right)
		\left( I + K_{1} G(s-\lambda) \right)^{-1},
		\label{eq:tranfer:loopTransform}
	\end{equation}
	satisfies \eqref{eq:composite:passivity}.
\end{corollary}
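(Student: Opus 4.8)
The plan is to reduce the statement to Corollary~\ref{cor:quasiCircle} through a static loop transformation that absorbs the lower sector bound $K_1$ into the linear block. In the differential setting, set $\partial\psi(y) := \partial\varphi(y) - K_1$ and introduce the transformed feedback variable $\delta v := \delta u + K_1\delta y$. Substituting $\delta u = \delta v - K_1\delta y$ into the variational equations \eqref{eq:diff_lureSys} produces a new open-loop map from $\delta v$ to $\delta y$ whose transfer function is exactly $\tilde{G}(s) = G(s)(I + K_1 G(s))^{-1}$ of \eqref{eq:transfer:closed}, closed by $\delta v = -\partial\psi(y)\delta y$. First I would record that this transformation is purely algebraic: eliminating $\delta v$ gives $\delta u = -\partial\psi(y)\delta y - K_1\delta y = -\partial\varphi(y)\delta y$, so the transformed prolonged closed loop has the \emph{same} variational dynamics $\dot{\delta x}$ as \eqref{eq:lureSys},\eqref{eq:diff_lureSys}. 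Hence strict $p$-dominance of the transformed loop is literally strict $p$-dominance of the original, and no separate equivalence argument is required.

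Next I would translate the sector. Since $\partial\varphi(y)\delta y - K_2\delta y = \partial\psi(y)\delta y - (K_2 - K_1)\delta y$, the differential sector condition \eqref{eq:sector:condition} is equivalent to $(\partial\psi(y)\delta y)^\top(\partial\psi(y)\delta y - (K_2 - K_1)\delta y)\leq 0$, i.e. $\partial\psi(y)$ lies in the sector $[0, K_2 - K_1]$. Expanding this quadratic form shows it is exactly the generalized sector condition \eqref{eq:generalised:sector:condition} for the shifted nonlinearity with the supply data $Q = 0$, $L = K_2 - K_1$ and $R = 2I$; here the assumed symmetry and positive definiteness of $K_2 - K_1$ guarantee that $L$ is a legitimate symmetric sector matrix.

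With these data in hand I would invoke Corollary~\ref{cor:quasiCircle} for the transformed system $\tilde{G}$ and the shifted nonlinearity $\partial\psi$. The relevant transfer function there is $Z(s-\lambda) = L\,\tilde{G}(s-\lambda) + \tfrac{1}{2}R = (K_2 - K_1)\tilde{G}(s-\lambda) + I$, and the key algebraic step is the identity $(K_2 - K_1)G(I + K_1 G)^{-1} + I = (I + K_2 G)(I + K_1 G)^{-1}$, obtained by writing $I + K_2 G = (I + K_1 G) + (K_2 - K_1)G$. This shows $Z(s-\lambda)$ coincides with \eqref{eq:tranfer:loopTransform}, so the hypothesis \eqref{eq:composite:passivity} is precisely what the corollary demands. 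The pole-count hypothesis, stated directly on $\tilde{G}(s)$ rather than $G(s)$ because the transformation generally relocates poles, matches the corollary verbatim. Applying Corollary~\ref{cor:quasiCircle} then yields strict $p$-dominance of the transformed closed loop, which by the first paragraph is the original closed loop.

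The routine calculations are the sector expansion and the transfer-function identity; the step requiring the most care is well-posedness, namely that $I + K_1 G(s)$ (equivalently $I + K_1 D$) is invertible so that $\tilde{G}$ is a well-defined proper transfer function with the same state dimension $n$, together with keeping all matrix products in the correct non-commuting order throughout the multivariable manipulation.
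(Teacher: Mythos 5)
Your proof is correct and follows essentially the same route as the paper's: the loop transformation $\tilde{\varphi}(y) = \varphi(y) - K_1 y$ (your $\partial\psi = \partial\varphi - K_1$), the translated sector $[0, K_2-K_1]$ giving $Q=0$, $L=K_2-K_1$, $R=2I$, the identity $(K_2-K_1)\tilde{G}+I = (I+K_2G)(I+K_1G)^{-1}$, and an appeal to Corollary~\ref{cor:quasiCircle}. Your explicit treatment of the variational equivalence of the transformed loop and of well-posedness of $I+K_1G$ is slightly more careful than the paper, which delegates these points to Figure~\ref{fig:loopTransformation}, but the argument is the same.
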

\begin{proofof}
	\emph{Corollary \ref{cor:circle}.}
	The proof follows directly from the loop transformation depicted in Figure \ref{fig:loopTransformation}
	and Corollary \ref{cor:quasiCircle}. Indeed, the loop transformation arises by considering 
	$\tilde{\varphi}(y) := \varphi(y) - K_{1} y$. Thus,
	\eqref{eq:sector:condition} transforms into
	\begin{equation}
		\label{eq:sector_special}
		\delta y^{\top} \partial \tilde{\varphi}(y)^{\top} \left( \partial \tilde{\varphi}(y) 
		\delta y - (K_{2} - K_{1}) \delta y \right) \leq 0.
	\end{equation}
	In other words, $\tilde{\varphi}$ satisfies
	\eqref{eq:generalised:sector:condition} with $R = 2I$, $Q = 0$, and
	$L = (K_{2} - K_{1})$. Furthermore, notice that the feedback of $G(s)$ and $\partial \varphi$ is
	equivalent to the one in Figure \ref{fig:loopTransformation}, where 
	$\tilde{G}(s)$ is as in \eqref{eq:transfer:closed}. Hence, $Z(s)$ in \eqref{eq:generalized:function}
	takes the form
	\begin{align*}
		Z(s) & = L \tilde{G}(s) + \frac{1}{2} R
		\\
		& = (K_{2} - K_{1})\tilde{G}(s) + I
		\\
		& = (I + K_{2}G(s))(I + K_{1} G(s))^{-1},
	\end{align*}
	and the result follows by Corollary \ref{cor:quasiCircle}.
	\begin{figure}[htpb]
		\centering
		\includegraphics[width=0.3\textwidth]{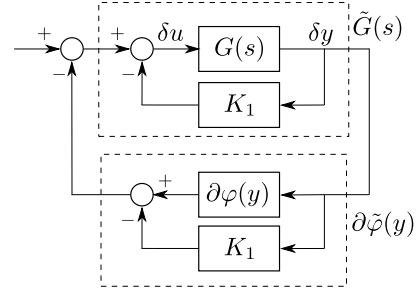}
		\caption{Loop transformation of the closed-loop system \eqref{eq:lureSys}.}
		\label{fig:loopTransformation}
	\end{figure}
\end{proofof}

Graphical conditions can be derived for SISO systems. In the next corollary
we will use $\Omega_{\lambda}$ to denote the Nyquist region
defined in \eqref{eq:nyquist_region}, and $\mathfrak{D}(K_1,K_2)$ to denote the 
disk in the complex plane given by the set 
$$
		\left\{ x \!+\! j y \!\in\! \mathbb{C} \,\bigg| \left( x \!+\! \frac{K_{1} + 
		K_{2}}{2 K_{1}K_{2}}\right)^{2} \!+\! y^{2}  \leq
		\left( \frac{K_{2} - K_{1}}{2 K_{1}K_{2}} \right)^{2}
		\right\}
$$
where $K_1$ and $K_2$ are real constants.
\begin{corollary} \label{corollary:nyquist:circle} 
	Consider the closed-loop system \eqref{eq:lureSys} given by 
	the linear system \eqref{eq:linear-open} with transfer function $G(s)$
	and by a static nonlinearity $\varphi$. \eqref{eq:lureSys} is strictly $p$-dominant if
	\begin{enumerate}
	\item[i)] the nonlinearity $\varphi$ satisfies  \eqref{eq:sector:condition};
	\item[ii)] $G(s)$ has no poles on the boundary of $\Omega_{\lambda}$;
	\item[iii)] the Nyquist plot of $G(s)$ along the boundary of  
		$\Omega_{\lambda}$ makes
		$E = p - q$ encirclements of
		the point $-1/K_{1}$ in the clockwise direction, where $q$ is the number of 
		poles of $G(s)$ in $\Omega_{\lambda}$; 
	\item[iv)] one of the following conditions is satisfied
			\begin{enumerate}
				\item $0 < K_{1} < K_{2}$ and the Nyquist plot of
				$G(s)$ along the boundary of $\Omega_{\lambda}$
				lies outside of the disc $\mathfrak{D}(K_{1}, K_{2})$.
				\item $K_{1} < 0 < K_{2}$ and the Nyquist plot of
					$G(s)$ along the boundary of $\Omega_{\lambda}$
					lies inside the disc $\mathfrak{D}(K_{1}, K_{2})$.
				\item $K_{1} < K_{2} < 0$ and the Nyquist plot of
					$G(s)$ along the boundary of $\Omega_{\lambda}$
					lies outside the disc $\mathfrak{D}(K_{1}, K_{2})$.
			\end{enumerate}
	\end{enumerate}
\end{corollary}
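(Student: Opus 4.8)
The plan is to reduce the corollary to the multivariable circle criterion, Corollary \ref{cor:circle}, by verifying that the SISO graphical hypotheses (i)--(iv) reproduce its two structural requirements: the correct pole placement of $\tilde{G}(s) = G(s)(1 + K_1 G(s))^{-1}$ relative to $\Omega_{\lambda}$, and the positivity condition \eqref{eq:composite:passivity} on $Z(s-\lambda)$ given by \eqref{eq:tranfer:loopTransform}. Condition (i) is literally the sector condition \eqref{eq:sector:condition} required by Corollary \ref{cor:circle}, and since all three alternatives in (iv) assume $K_{1} < K_{2}$, the scalar $K_{2} - K_{1}$ is positive, as needed.

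First I would handle the pole count using Cauchy's principle of the argument along $\boundary \Omega_{\lambda}$, exactly as in the proof of Theorem \ref{theorem:nyquistCriterion}. The poles of $\tilde{G}(s)$ are the zeros of $1 + K_{1} G(s)$, while the poles of $1 + K_{1} G(s)$ inside $\Omega_{\lambda}$ coincide with the $q$ poles of $G(s)$ there; condition (ii) ensures the contour $\boundary \Omega_{\lambda}$ meets no pole, so the argument principle applies directly. Counting the clockwise encirclements $E$ of the origin by $f(s) := 1 + K_{1} G(s)$ --- equivalently, of $-1/K_{1}$ by $G(s)$ --- gives $E$ as the difference between the number of zeros and poles of $f$ in $\Omega_{\lambda}$. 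Matching this to the hypothesis $E = p - q$ of (iii) forces $\tilde{G}$ to have exactly $p$ poles in the interior of $\Omega_{\lambda}$; since $\tilde{G}$ carries a full complement of $n$ poles, the remaining $n-p$ lie in the complement, which is the first requirement of Corollary \ref{cor:circle}.

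Next I would translate the disc condition (iv) into the frequency-domain positivity \eqref{eq:composite:passivity}. Writing $g := G(j\omega - \lambda) = x + jy$, a direct computation of $Z(j\omega-\lambda) + Z^{*}(j\omega-\lambda) = 2\,\Re\{Z(j\omega-\lambda)\}$ from \eqref{eq:tranfer:loopTransform}, after multiplying through by the positive quantity $|1 + K_{1} g|^{2}$, yields the sign condition
\begin{equation*}
1 + (K_{1} + K_{2})\,x + K_{1} K_{2}\,(x^{2} + y^{2}) > 0.
\end{equation*}
When $K_{1} K_{2} > 0$ (cases (a) and (c)), dividing by $K_{1}K_{2}$ and completing the square shows this is equivalent to the Nyquist locus lying strictly outside $\mathfrak{D}(K_{1}, K_{2})$; when $K_{1} K_{2} < 0$ (case (b)) the division reverses the inequality and the locus must lie strictly inside $\mathfrak{D}(K_{1}, K_{2})$. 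Since the Nyquist plot of $G(s)$ along $\boundary \Omega_{\lambda}$ is precisely $G(j\omega - \lambda)$ for $\omega \in \RE \cup \{\infty\}$, condition (iv) is exactly \eqref{eq:composite:passivity}. With both requirements of Corollary \ref{cor:circle} established, strict $p$-dominance of \eqref{eq:lureSys} follows.

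The step I expect to be the main obstacle is the bookkeeping in the argument-principle count: correctly orienting $\boundary \Omega_{\lambda}$ so that clockwise encirclements of $-1/K_{1}$ align in sign with "zeros minus poles" in $\Omega_{\lambda}$, and confirming that $\tilde{G}$ indeed retains $n$ poles so that fixing $p$ inside $\Omega_{\lambda}$ genuinely leaves $n-p$ outside. By contrast, the algebraic reduction of (iv) to \eqref{eq:composite:passivity} is routine once the three cases are separated by the sign of $K_{1}K_{2}$.
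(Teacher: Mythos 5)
Your proposal is correct and takes essentially the same route as the paper: the paper's own proof also runs the loop transformation $\tilde{\varphi}(y) = \varphi(y) - K_{1}y$, uses conditions (ii)--(iii) with Theorem \ref{theorem:nyquistCriterion} (the argument-principle count you redo by hand) to place the poles of $\tilde{G}(s)$, and reduces condition (iv) to \eqref{eq:composite:passivity} via the identical computation $K_{1}K_{2}Y(\omega)^{2} + (K_{1}X(\omega)+1)(K_{2}X(\omega)+1) > 0$ split by the sign of $K_{1}K_{2}$. The only cosmetic difference is that you invoke Corollary \ref{cor:circle} as a packaged statement, whereas the paper reassembles the same ingredients directly from Corollary \ref{cor:quasiCircle} and the loop transformation.
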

\begin{proof}
	The proof relies on the loop transformation of Figure \ref{fig:loopTransformation} and
	Corollary \ref{cor:quasiCircle}. Indeed, notice that
	$i)$ guarantees that $\tilde{\varphi}(y) = \varphi(y) - K_{1} y$ satisfies
	\eqref{eq:generalised:sector:condition}
	for $L = K_{2} - K_{1}$, $R = 2 I$ and $Q = 0$.  Now, define $Z(s)$ as in 
	\eqref{eq:generalized:function} but applied to the feedback interconnection of $\tilde{G}(s)$ and
	$\partial \tilde{\varphi}(y)$, that is $Z(s) = L \tilde{G}(s) + \frac{1}{2} R$, where
	\begin{displaymath}
		\tilde{G}(s) = \frac{G(s)}{1 + K_{1}G(s)}.
	\end{displaymath}
	The well-posedness of $\tilde{G}(j \omega - \lambda)$ follows by $ii)$. Next, by $iii)$ and Theorem 
	\ref{theorem:nyquistCriterion} it follows that $\tilde{G}(s)$ is strictly $p$-dominant. Thus, it 
	remains to prove that $Z(j \omega - \lambda)$ satisfies \eqref{eq:composite:passivity} in order to
	guarantee that all the assumptions of Corollary \ref{cor:quasiCircle} are satisfied and the result
	follows. 
	In the scalar case \eqref{eq:composite:passivity} is equivalent to
	\begin{equation} \label{eq:circle:condition}
		\Re \left\{ \frac{1 + K_{2} G(j \omega-\lambda)}{1 + K_{1} G(j \omega - \lambda)}
	\right\} > 0, \text{ for all } \omega \in \RE \cup \{\infty\}.
	\end{equation}
	
	Let $X(\omega)$ and $Y(\omega)$ be respectively the
	real and imaginary parts of $G(j \omega-\lambda)$, that is, $G(j \omega-\lambda)
	= X(\omega) + j Y(\omega)$. Straightforward computations reveal that \eqref{eq:circle:condition} 
	is equivalent to
	\begin{equation}
		K_{1} K_{2} Y(\omega)^{2} + \left( K_{1} X(\omega) + 1 \right)
		\left( K_{2} X(\omega) + 1 \right) > 0, 
		\label{eq:circle:condition:2}
	\end{equation}
	for all $\omega \in \RE \cup \{\infty\}$. Now, assuming that $0 < K_{1} < K_{2}$,
	\eqref{eq:circle:condition:2} can be rewritten as
	\begin{displaymath}
		\left( X(\omega) + \frac{K_{1} + K_{2}}
		{2 K_{1} K_{2}} \right)^{2} + Y(\omega)^{2} > \frac{\left( K_{2} - K_{1} \right)^{2}}
		{4 K_{1}^{2} K_{2}^{2}}
	\end{displaymath}
	which requires that the Nyquist plot of $G(s)$ along the boundary of 
	$\Omega_{\lambda}$ must lie 
	outside the disc $\mathfrak{D}(K_{1}, K_{2})$, leading to  $iv.a)$. The other two cases are similar.
\end{proof}
The reader will recognize that the above corollaries extend the classical
circle criterion to the analysis of attractors that are not necessarily fixed points.
The next example shows how to use these tools to give 
insights on the existence of oscillatory behaviors in Lur'e systems.
\begin{example} \label{example:positive:negative}
	We revisit Example \ref{ex:3-poles} with $M > 0$,
	for $\varphi(y) = a \tanh(k y)$, where $a$, $k$ are
	positive real numbers. Notice that $\varphi(y)$ satisfies \eqref{eq:sector:condition}
	with $K_{1} = 0$ and $K_{2} = ak$.
	
	\underline{Negative feedback}: the closed loop is $2$-passive. In fact, 
	looking at the second case in Example \ref{ex:3-poles},
	the transfer function $G(s)$ of the linear part is $2$-passive with rate $\lambda$ satisfying 
	\eqref{eq:lambda:2-passive}. 
	A direct application of the Corollary \ref{corollary:nyquist:circle} 
	reveals that the closed loop is $2$-dominant whenever the Nyquist plot of $G(s)$ along 
	the boundary of $\Omega_{\lambda}$ satisfies $iii)$ and $iv.a)$ for any
	nonlinearity in the differential sector $\sector[K_{1}, K_{2}]$. 
	For example, setting $\lambda = 2.6$, and the system parameters 
	$M = 1$, $\beta_{1} = 1$, $\beta_{2} = 2$, $\beta_{3} = 3$, $a = 10$ and $k = 10$, it becomes clear
	that $ii)$ in Corollary \ref{corollary:nyquist:circle} holds. Furthermore, with the selected values of
	parameters, it follows that $G(s)$ has $2$ poles in $\Omega_{\lambda}$, hence $iii)$ in 
	Corollary \ref{corollary:nyquist:circle} asks for $E = 2 - 2 = 0 $ encirclements of the 
	point $-1/K_{1}$.
	Figure \ref{fig:ex:limitCycle} reveals that the Nyquist plot of $G(s)$ along the boundary of 
	$\Omega_{\lambda}$ lies in $\mathbb{C}_{+}$, that is, $iii)$ and $iv.a)$ also hold for any
	$k \in (0, +\infty)$. Hence, the $2$-dominance of 
	the closed-loop with rate $\lambda = 2.6$ follows.
	\underline{Positive feedback}: in this case, the addition of the constant multiplier $-1$
	leads us to consider the negative feedback case as above but with $M < 0$.
	Thus, from the
	first part of Example \ref{ex:3-poles} it follows that the $G(s)$ is $1$-passive with rate
	$\lambda$ satisfying \eqref{eq:lambda:1-passive}. 
	From Corollary \ref{corollary:nyquist:circle}, the closed loop is $1$-dominant whenever the Nyquist plot of $G(s)$ along 
	the boundary of $\Omega_{\lambda}$ satisfies $iii)$ and $iv.a)$. 
	For example, setting the rate $\lambda = 2.6$ and the system parameters 
	$M = -10$, $\beta_{1} = 2$, $\beta_{2} = 3$, $\beta_{3} = 5$, $a = 10$ and $k = 10$,
	it follows that the open-loop
	has $1$ pole in the interior of $\Omega_{\lambda}$ and because the Nyquist plot of 
	$G(j \omega - \lambda)$ falls in $\mathbb{C}_{+}$ (see Figure
	\ref{fig:ex:bistable}), the conditions $ii), iii)$ and $iv.a)$ hold for any
	$k \in (0 +\infty)$, which proves  $1$-dominance of the closed-loop.
	\begin{figure}[htpb]
		\centering
		\includegraphics[width=0.3\textwidth]{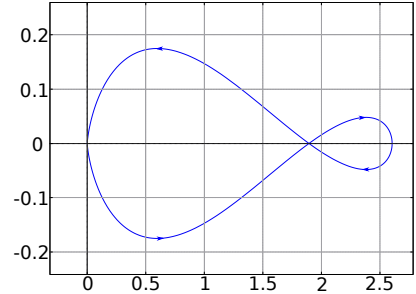}
		\caption{Nyquist plot of the system \eqref{eq:bistable:limitcycle} (with parameters
		$\beta_{1} = 1$, $\beta_{2} = 2$, $\beta_{3} = 3$ and $M = 1$), along the 
		$\lambda$-shifted Nyquist path $\Omega_{\lambda}|_{\lambda = 2.6}$. Notice that in
		this case the system is $2$-passive.
		}
		\label{fig:ex:limitCycle}
	\end{figure}
	
	\begin{figure}[htpb]
		\centering
		\includegraphics[width=0.3\textwidth]{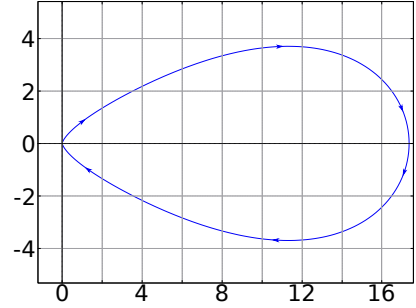}
		\caption{Nyquist plot of the system \eqref{eq:bistable:limitcycle} (with parameters
		$\beta_{1} = 2$, $\beta_{2} = 3$, $\beta_{3} = 5$ and $M = -10$), along the 
		$\lambda$-shifted Nyquist path $\Omega_{\lambda}|_{\lambda = 2.6}$. Notice that in this
	case the system is $1$-passive.}
		\label{fig:ex:bistable}
	\end{figure}
\end{example}
\section{The asymptotic behavior of dominant Lur'e systems}
\label{section:lure:differential}
In the previous sections we have extended a number of classical results to the analysis of  $p$-dominance.
We will now illustrate how this analysis can be used to analyze  the asymptotic behavior of
Lur'e systems.
\subsection{Contraction analysis ($0$-dominance)}
As a first example we briefly revisit the property of 
global contraction of the vector field, largely studied in the literature
(see e.g., \cite{lohmiller1998, pavlov2005, Fromion2005,jouffroy2010} and references therein).
Strict $0$-dominance implies contraction.
Indeed, Theorem \ref{th:quasiCircle} and its corollaries provide conditions
for global contraction. The zero equilibrium of a Lure feedback system is then necessarily
globally asymptotically stable.
\subsection{Bistability ($1$-dominance)}
From Corollary \ref{thm:asymptotic_behavior},  strict $1$-dominance is a useful tool for the analysis
of bistability. Global bistability of the Lure feedback system  \eqref{eq:lureSys} is ensured from the following three properties:
\begin{enumerate}
\item strict $1$-dominance of the closed loop; 
\item boundedness of solutions; and
\item the algebraic equation $ u + \varphi(G(0)u) = 0$ has  three isolated solutions.
\end{enumerate}
As an illustration, consider Example \ref{example:positive:negative}. 
Recall that, with positive feedback, any nonlinearity in the differential sector $(0, +\infty)$
guarantees strict $1$-dominance of the closed loop \eqref{eq:lureSys},
with rate $\lambda = 2.6$. This proves the $1$-dominance.
Trajectories of the closed-loop system are bounded because the input $u = -\varphi(y)$ is
by definition bounded and the linear system is BIBO stable.  
Finally, a graphical argument shows that the algebraic equation has three isolated solutions.
In conclusion, we have shown that the closed-loop system formed by \eqref{eq:bistable:limitcycle} in 
positive feedback interconnection with $\varphi(y) := a \tanh(k y)$ is globally bistable, that is, every
solution converges to one of the three fixed points, out of which one is unstable. Figure \ref{fig:lure:bistable} confirms
the predicted behavior.
\begin{figure}[htpb]
	\centering
	\includegraphics[width=0.35\textwidth]{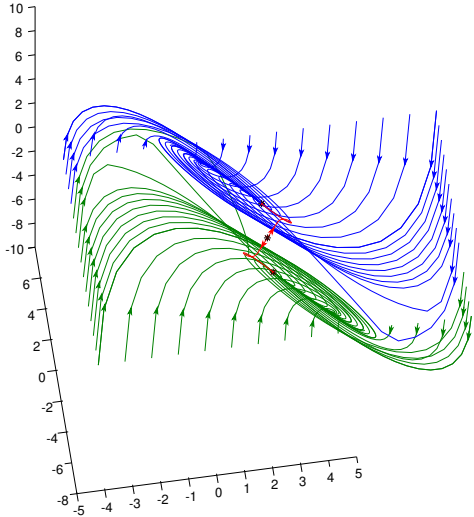}
	\caption{Phase trajectories of the system \eqref{eq:bistable:limitcycle} in a positive
	feedback interconnection with the nonlinearity $\varphi(y) = a \tanh(k y)$ with parameters 
	$M = 10$, $\beta_{1} = 2$, $\beta_{2} = 3$, $\beta_{3} = 5$, $a = 10$ and $k = 10$. 
	The red trajectory represents the heteroclinic solution.}
	\label{fig:lure:bistable}
\end{figure}
The proposed analysis is also useful for control design. 
If the open-loop system does not have the desired 
degree of dominance, a controller can be introduced to shape the 
frequency response in such a way that the assumptions of Corollary \ref{corollary:nyquist:circle} are met.
For illustration, we study the bistability of the Lur'e system arising from 
the interconnection of 
\begin{equation}
	G(s) = \frac{3 (s + 1)}{(s^{2} + 4 s + 8)(s+3)} \ .
	\label{eq:tf:1mono}
\end{equation}
with a  saturating input-output characteristic in the differential sector $[1, 5]$. 
We consider the case of
positive feedback. Note that the open-loop system $G(s)$ is not $1$-dominant for any value of 
rate $\lambda$, since it has dominant complex conjugated poles at $s = -2 \pm j 2$. 
To achieve strict $1$-dominance of the closed loop we enforce
strict $1$-dominance of the return ratio by pairing $G(s)$ with 
a controller $C(s)$. Set the desired rate to 
$\lambda = 2.1$ and observe that $G(s-\lambda)$ has two dominant poles at 
$s = 0.1 \pm j 2$ and one pole at $s = -0.9$.
Take $C(s-\lambda) := 0.4/(s - 1.9)$.
The Nyquist plot of $-KG(s-\lambda)C(s-\lambda)$  
is depicted in Figure \ref{fig:nyquist:1mono}. 
The main idea behind the selection of the controller 
$C(s-\lambda)$ is to increase the phase change in the Nyquist plot.
By adding an unstable pole in the shifted-system, the
associated Nyquist plot reflects a change in phase of 180 degrees, at zero frequency.
Thus, we can set the value of the gain $K$ to achieve
$2$ encirclements in the counterclockwise direction around the disk $\mathfrak{D}(1,5)$.
$1$-dominance follows.
It is noteworthy that the transfer function of the desired controller reads $C(s) = 0.4 / (s + 0.2)$, 
which is a simple first-order lag. Now, setting $\varphi(y) := y + \tanh(4y)$,
the same analysis as above shows that the system has three equilibria, that all solutions are
bounded and the origin is unstable. Hence, the system is bistable.
\begin{figure}[htpb]
	\centering
	\includegraphics[width=0.4\textwidth]{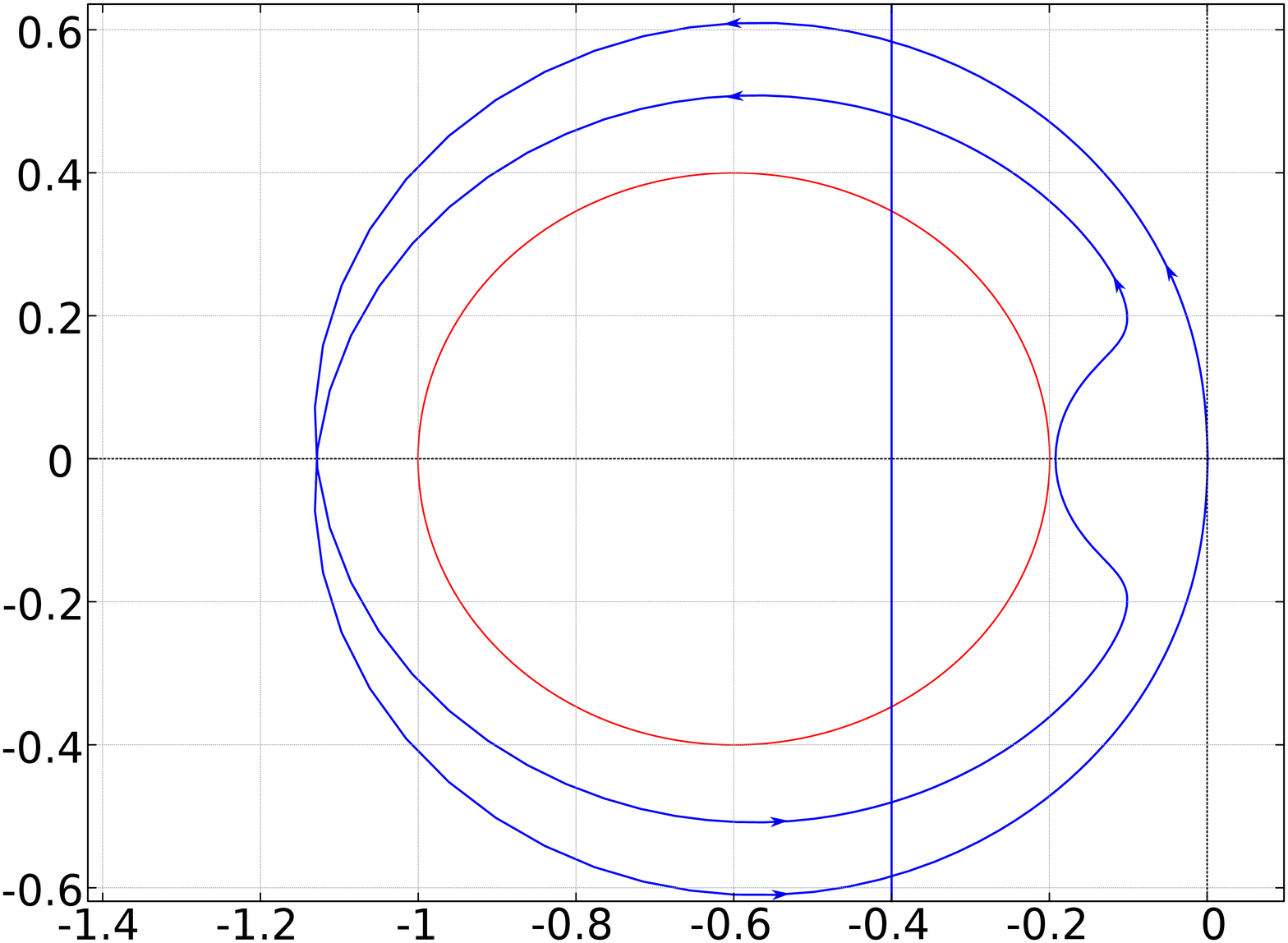}
	\caption{Nyquist plot of $-KG(s-\lambda)C(s-\lambda)$ for $s=j\omega$, $\omega \in \real$,
	for $\lambda = 2.1$.}
	\label{fig:nyquist:1mono}
\end{figure}
\subsection{Limit cycle oscillation ($2$-dominance)}
To prove the convergence of solutions to a limit cycle in a Lure feedback system, we verify
the following three conditions: 
\begin{enumerate}
\item strict $2$-dominance of the closed loop; 
\item boundedness of the solutions;
\item a forward invariant region that does not contain fixed points.
\end{enumerate}
By Corollary \ref{thm:asymptotic_behavior}, those three conditions imply that all trajectories 
with initial condition in the forward invariant region converge to a periodic attractor.
Returning to Example \ref{example:positive:negative}, recall that 
the negative feedback of $G(s)$ with any
nonlinearity in the differential sector $(0,+ \infty)$ 
gives a strictly $2$-dominant closed loop. 
We conclude that all trajectories that do not converge to the unstable fixed point
necessarily converge to a limit cycle. This is illustrated Figure \ref{fig:lure:limitCycle}.
\begin{figure}[htpb]
	\centering
	\includegraphics[width=0.35\textwidth]{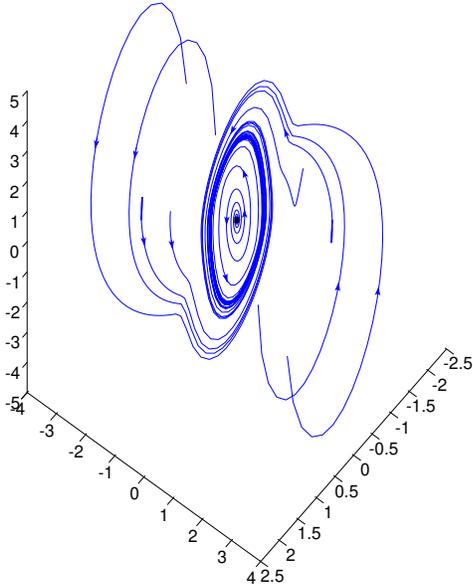}
	\caption{Phase portrait of the system \eqref{eq:bistable:limitcycle} in feedback interconnection
	with the nonlinearity $\varphi(y) = a \tanh(k y)$ with parameters $M = 1$, $\beta_{1} = 1$,
	$\beta_{2} = 2$, $\beta_{3} = 3$, $a = 10$ and $k = 10$.}
	\label{fig:lure:limitCycle}
\end{figure}
\begin{remark}
The analysis of $2$-dominance says nothing about  the uniqueness of the limit cycle.
As an illustration, we will verify that the system  analyzed  in 
Example \ref{ex:kalman:counterexample} is  $2$-dominant, despite the existence of 'hidden' oscillations.
The nonlinearity $\varphi(y) = \tanh(y)  + \varepsilon y$
satisfies the differential sector condition $\frac{\partial \varphi}{\partial y} \in [K_{1},
	K_{2}]$, with $K_{1} = \varepsilon$, $K_{2} = 1 + \varepsilon$, where $\varepsilon = 0.0125$.
	By Corollary \ref{corollary:nyquist:circle}, the closed-loop system is
	$2$-dominant with rate $\lambda = 0.275$ for all the nonlinearities in the prior sector, see
	Figure \ref{fig:kalman:circle}. 
	\begin{figure}[htpb]
		\centering
		\includegraphics[width=0.4\textwidth]{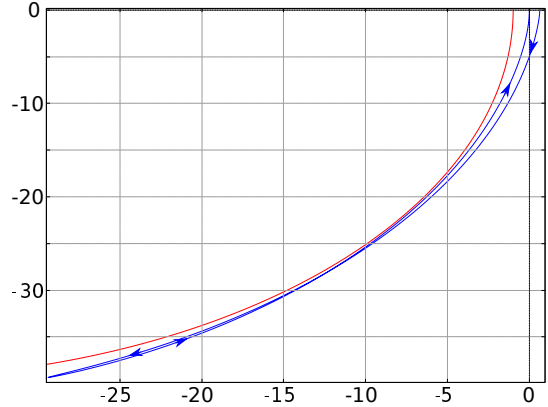}
		\caption{Nyquist plot of the system \eqref{eq:kalman:counter} for $\omega$
		running from $[0, +\infty)$, showing that the Nyquist plot falls outside of the 
			disk $\mathfrak{D}(K_{1}, K_{2})$ (in red). Thus,
		the system is $2$-dominant with rate $\lambda = 0.275$ for nonlinearities in
		the differential sector	$\sector[K_{1}, K_{2}]$.}
		\label{fig:kalman:circle}
	\end{figure}
\end{remark}
\subsection{Chaotic behavior ($3$-dominance)}
Chaotic behaviors may arise for larger dominance degree. 
Consider the Chua's circuit described by 
 \begin{displaymath}
	 \begin{cases}
		 \dot{x}_{1} = \alpha (x_{2} -x_{1} - u)
		 \\
		 \dot{x}_{2} = x_{1} - x_{2} + x_{3}
		 \\
		 \dot{x}_{3} = - \beta x_{2} \ .
	 \end{cases}
 \end{displaymath}
It is well known that this Lur'e system shows a chaotic behavior for certain range of 
parameters and nonlinearity $u = - \varphi(x_{1})$, \cite{matsumoto1985}.
In fact, with the parameters $\alpha = 8.8$, $\beta = 15$  and the nonlinearity $\varphi(x_{1}) = \tanh(2 x_{1}) + 0.7
x_{1}$ the double scroll attractor appears. 
An application of Corollary \ref{corollary:nyquist:circle} shows that the interconnection is
in fact $3$-dominant
with rate $\lambda = 4$. Indeed, considering $\lambda = 4$, the system has 
two complex eigenvalues in $\Omega_\lambda$. By Corollary \ref{corollary:nyquist:circle}
it follows that the system is $3$-dominant if the Nyquist plot encloses the disk $\mathfrak{D}(0.7, 2)$ once
in the clockwise direction, as shown in Figure \ref{fig:chuaNyquist}. Furthermore, the system is $3$-passive
with rate $\lambda > 9.67$.
\begin{figure}[htpb]
	\centering
	\includegraphics[width=0.4\textwidth]{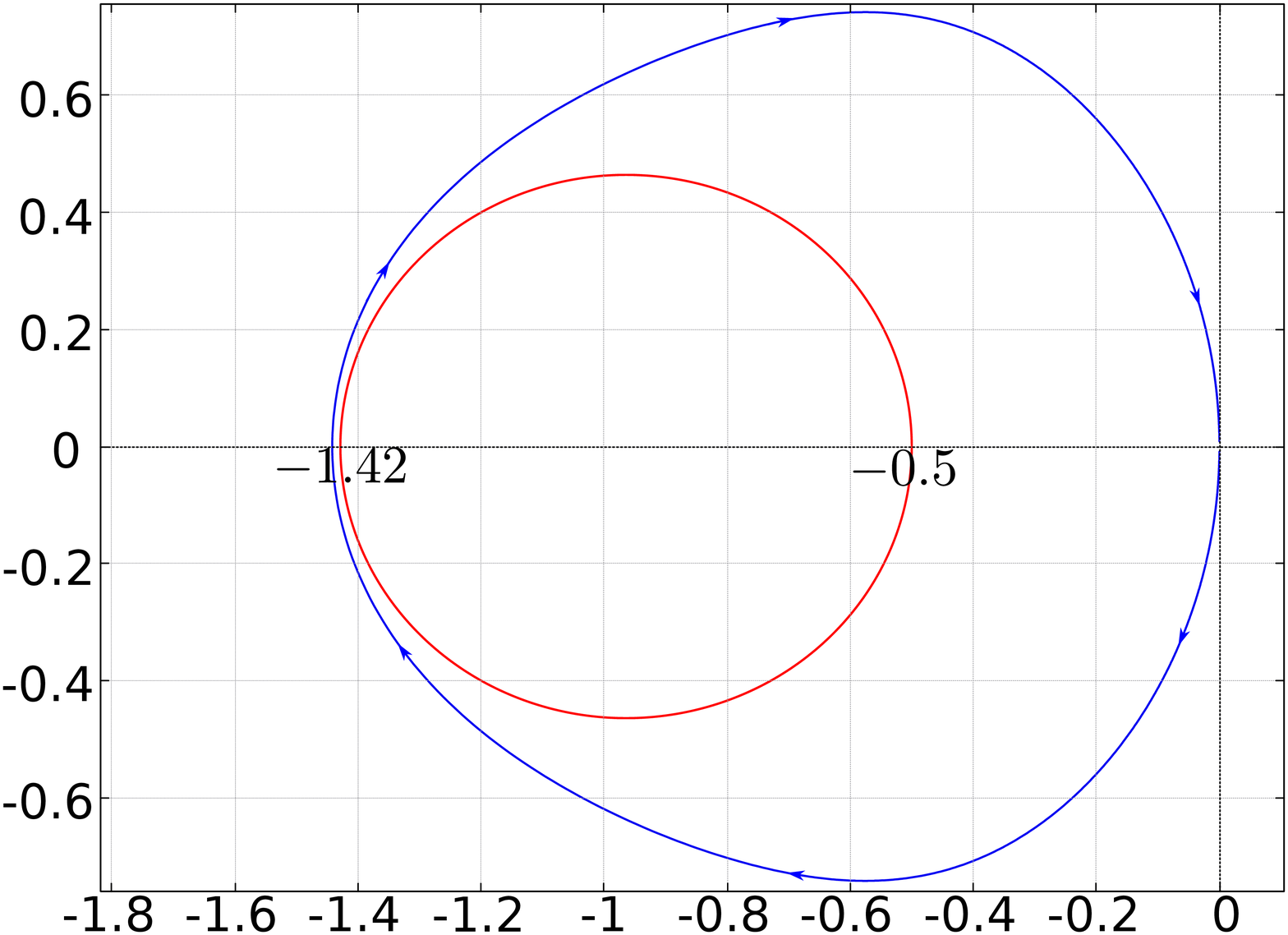}
	\caption{Nyquist diagram of the linear part of Chua's circuit with rate $\lambda = 4$.}
	\label{fig:chuaNyquist}
\end{figure}
\section{Conclusions}
Dominance analysis provides sufficient conditions for the asymptotic behavior
of a nonlinear system to be low-dimensional. For LTI systems, the property
is verified by solving a linear matrix inequality and checking the inertia of the 
solution. The KYP lemma provides a frequency-domain characterization of this
property. We have illustrated the potential of the frequency domain characterization
in the analysis of Lure's feedback systems. The analysis provides graphical tests
for $p$-dominance such as the circle criterion. More fundamentally, it provides
robustness margins very much like in stability analysis. The theory has been
illustrated on the analysis of bistable $1$-dominant systems and oscillatory
$2$-dominant systems.

\end{document}